\patchcmd{\subsection}{\textbf}{}{}{}
\patchcmd{\subsection}{-.5em}{.2em}{}{}
\newtheorem{theorem}{Theorem}[section]
\newtheorem{lemma}[theorem]{Lemma}
\theoremstyle{definition}
\newtheorem{remark}[theorem]{Remark}
\numberwithin{equation}{section}
\newcommand*\tho{\text{\thorn}}
\newcommand*\dho{\text{\dh}}
\DeclareMathOperator{\Tr}{Tr}
\newcommand*\Bell{\ensuremath{\boldsymbol\ell}}
\newcommand*\Bn{\ensuremath{\boldsymbol{n}}}
\newcommand*\Bm{\ensuremath{\boldsymbol m}}
\newcommand*\Bg{\ensuremath{\boldsymbol{g}}}
\newcommand*\BG{\ensuremath{\boldsymbol{G}}}
\newcommand*\BR{\ensuremath{\boldsymbol{R}}}
\newcommand*\BC{\ensuremath{\boldsymbol{C}}}
\newcommand*\BF{\ensuremath{\boldsymbol{F}}}
\newcommand*\BT{\ensuremath{\boldsymbol{T}}}
\newcommand*\BH{\ensuremath{\boldsymbol{H}}}
\newcommand*\BS{\ensuremath{\boldsymbol{S}}}
\newcommand*\BA{\ensuremath{\boldsymbol{A}}}
\newcommand*\de{\ensuremath{\textnormal{d}}}
\newcommand*\phip{\ensuremath{f}}
\newcommand{\pp}{{\it pp\,}-}
\def \T {\bigtriangleup  }
\def \bF {\mbox{\boldmath{$F$}}}
\newcommand{\be}{\begin{equation}}
\newcommand{\ee}{\end{equation}}
\newcommand{\beqn}{\begin{eqnarray}}
\newcommand{\eeqn}{\end{eqnarray}}
\newcommand{\ba}{\begin{array}}
\newcommand{\ea}{\end{array}}
\def \bl {\mbox{\boldmath{$\ell$}}}
\def \bn {\mbox{\boldmath{$n$}}}
\def \bF {\mbox{\boldmath{$F$}}}
\def\d{\mathrm{d}}
\def \bff {\mbox{\boldmath{$f$}}}
\def \T {\bigtriangleup}
\begin{document}

\title{Einstein-Maxwell fields with vanishing higher-order corrections}


\author{\footnotesize Martin Kuchynka$^\dagger$ and Marcello Ortaggio$^\star$}
\address{$^{\dagger,\star}$Institute of Mathematics of the Czech Academy of Sciences \\ \v Zitn\' a 25, 115 67 Prague 1, Czech Republic
\\
$^\dagger$Institute of Theoretical Physics, Faculty of Mathematics and Physics, Charles University in Prague,
V Hole\v{s}ovi\v{c}k\'ach 2, 180 00 Prague 8, Czech Republic}
\email{$^\dagger$ kuchynkm(at)gmail(dot)com}
\email{$^\star$ ortaggio(at)math(dot)cas(dot)cz}





\begin{abstract}

We obtain a full characterization of Einstein-Maxwell $p$-form solutions $(\Bg,\BF)$ in $D$ dimensions for which all higher-order corrections vanish identically. These thus simultaneously solve a large class of Lagrangian theories including both modified gravities and (possibly non-minimally coupled) modified electrodynamics.
Specifically, both $\Bg$ and $\BF$ are fields with vanishing scalar invariants and further satisfy two simple tensorial conditions. They describe a family of gravitational and electromagnetic plane-fronted waves of the Kundt class and of Weyl type III (or more special). The local form of $(\Bg,\BF)$ and a few examples are also provided.

\end{abstract}

\date{\today}

\maketitle

\section{Introduction and summary}
\label{intro}

While the Einstein-Maxwell Lagrangian is generally considered to describe the prototype theory of gravity coupled to electromagnetism, there is also a long history of so-called ``alternative theories''. The long-standing problem of the electron's self-energy led to a modified electrodynamics already in 1912 \cite{Mie12} and subsequently to the well-known non-linear theory of Born and Infeld \cite{Born33,BorInf34} (see, e.g., \cite{Plebanski70} for more general non-linear electrodynamics (NLE)). Soon after the birth of General Relativity, the quest for a unified description of gravity and electromagnetism also inspired several modifications of Einstein's theory -- see, e.g., the early works \cite{Weyl18,Weyl19} and the reviews \cite{Goenner04,Goenner14} for more references. In subsequent years, further motivation to take into account deviations from the Einstein-Maxwell theory came from considering effective Lagrangians which include various type of quantum corrections (cf., e.g., \cite{Deser75,Dunne05} and the original references quoted there) or low-energy limits of string theory \cite{SchSch74,Callanetal85,FraTse85,AndTse88}.

Not surprisingly, adding higher-order corrections to the Einstein and Maxwell equations makes those generically more difficult to solve. However, it is remarkable that there exist theory-independent solutions, i.e., solutions ``immune'' to (virtually) any type of corrections. One can thus employ known solutions of the Einstein-Maxwell equations to explore more complicated theories, at least in certain regimes. This was first pointed out in the context of NLE by Schr\"odinger, who showed that all null fields which solve Maxwell's theory also automatically solve any NLE in vacuum \cite{Schroedinger35,Schroedinger43}. The inclusion of backreaction on the spacetime geometry in the full Einstein-Maxwell theory was later discussed in \cite{Peres61}. Subsequently, it was noticed that electromagnetic plane waves solve not only NLE but also higher-order theories \cite{Deser75} (in flat spacetime; see also \cite{Schwinger51}), and that a similar property is shared by Yang-Mills and gravitational plane waves \cite{Deser75}. Backreaction was taken into account in \cite{Guven87}, whereas extensions of these results to more general (electro)vacuum $pp$- and AdS-waves were obtained in \cite{AmaKli89,HorSte90,Horowitz90} and \cite{HorItz99}, respectively. This was used, in particular, to discuss spacetime singularities in string theory \cite{HorSte90,Horowitz90}.

Recently, a more systematic analysis of $D$-dimensional Einstein spacetimes immune to purely gravitational corrections (``universal spacetimes'') was initiated in \cite{Coleyetal08} and further developed in \cite{HerPraPra14,Herviketal15,HerPraPra17} (see also \cite{Gursesetal13,GurSisTek14,GurSisTek17} for related results in the case of Kundt (AdS-)Kerr-Schild metrics). From a complementary viewpoint, a study of test Maxwell fields which simultaneously solve also generalized theories of ($p$-form) electrodynamics (``universal electromagnetic fields'') has been performed in \cite{OrtPra16,OrtPra18,HerOrtPra18}. In spite of considerable progress, a full characterization of (i.e., a necessary and sufficient condition for) universal spacetimes and universal electromagnetic fields is, in general, still lacking (but see the above references for various results in special cases).

In the present contribution we investigate solutions of the {\em coupled} (possibly also non-minimally) Einstein-Maxwell equations for which all higher-order corrections vanish identically in arbitrary dimension $D$ and for any rank $p$ of the Maxwell form. We show that a full characterization is possible, which we formulate as theorems~\ref{minuniversal} and \ref {nonminuniversal}. Essentially (up to technicalities to be explained in the following), we prove that for a solution $(\Bg,\BF)$ of the Einstein-Maxwell theory, {\em all higher-order corrections vanish if, and only if, both $(\Bg,\BF)$ are fields with vanishing scalar invariants ($VSI$) and additionally satisfy the two tensorial conditions $C_{acde}C\indices{_{b}^{cde}} = 0$ and $\nabla_c F_{ad \dots e} \nabla^c F\indices{_b^{d \dots e}} = 0$}. This implies, in particular, that the spacetime is Kundt and possesses a recurrent null vector field (but is not necessarily a \pp wave) and that the cosmological constant vanishes. This characterization of a large class of exact solutions make those relevant in contexts more general than the Einstein-Maxwell theory, with possible applications, e.g., in string theory along the lines of \cite{Guven87,AmaKli89,HorSte90,Horowitz90}. Moreover, the methods used in this work are suitable also for further extensions of the results obtained here, for example to Yang-Mills solutions. 

The structure of the paper is as follows. In section~\ref{theories} we define the theories under considerations and in what sense those can be considered as corrections to the Einstein-Maxwell theory. Section~\ref{universalsolutions} contains the main results of this paper, namely theorems~\ref{minuniversal} and \ref {nonminuniversal} (in the case of minimally and non-minimally coupled theories, respectively) and their proofs. A simpler result for the special case of Einstein gravity coupled to {\em algebraically} corrected electrodynamics (relevant for theories similar to NLE) is also obtained (theorem~\ref{algebraic}). In section~\ref{typeIIIsolutions}, we present the explicit form of the solutions $(\Bg,\BF)$ in adapted coordinates, which is more suitable for practical applications, along with a few examples. The relation of the solutions to universal spacetimes \cite{Coleyetal08,HerPraPra14,Herviketal15,HerPraPra17} and universal electromagnetic fields \cite{OrtPra16,OrtPra18,HerOrtPra18} is also discussed, along with the overlap with Kerr-Schild spacetimes. Some additional comments are provided in the special case of four spacetime dimensions. The four appendices contain various technical results used throughout the paper (in particular, in the proofs of the main theorems). Most of those are new and of some interest in their own, and we believe they will be useful also in future investigations (we have quoted the relevant references in the few cases in which we simply summarize previously known results).

\subsection*{Notation}
Throughout the paper, we employ the boost-weight classification of tensors \cite{Milsonetal05} (cf. also the review \cite{OrtPraPra13rev}) -- this relies on setting up a
frame of $D$ real vectors $\Bm_{(a)} $ which consists of two null vectors $\bl\equiv{\mbox{\boldmath{$m_{(0)}$}}}$,  $\bn\equiv{\mbox{\boldmath{$m_{(1)}$}}}$ and $D-2$ orthonormal spacelike vectors $\Bm_{(i)} $ (with $a, b\ldots=0,\ldots,D-1$ and $i, j, \ldots=2,\ldots,D-1$), such that the metric reads
\be
	\Bg=\bl\otimes\bn+\bn\otimes\bl+\Bm_{(i)}\otimes\Bm_{(i)} .
	\label{g_null}
\ee	
The range of lowercase Latin indices when indicating an order of differentiation (e.g., in $\nabla^{(k)} \BR$) will be specified as needed.
Furthermore, $\BR$, $\BC$, $\BS$ denote the Riemann and Weyl tensors and the tracefree part of the Ricci tensor (cf. \eqref{S}), respectively. A $p$-form is denoted by $\BF$. 
A ``Maxwell $p$-form'' is a $p$-form which obeys the sourcefree Maxwell equations, i.e., $\d\bF=0=\d\star\bF$.

\section{Higher order theories of gravity and electromagnetism}
\label{theories}

\subsection{Form of the Lagrangian}

\label{subsec_lagrangian}

In the paper, we take into account virtually all classical Lagrangian theories of gravity coupled to electromagnetism, described by the electrovacuum Einstein-Maxwell equations with higher-order corrections. More precisely, we
consider a theory of gravity and $p$-form electromagnetism, in spacetime dimensions $D\ge3$ and with $1\le p\le D-1$,\footnote{As well-known, a Maxwell $D$-form reduces to the spacetime volume element (up to a multiplicative constant) and simply gives rise to an effective positive cosmological constant, so that a spacetime with vanishing higher-order corrections must be Einstein (for $D=2$ this simply fixes the value of $\Lambda$ in terms of $\BF$). The cases $p=D$ and, by duality, $p=0$, are thus of little interest in our work. We also exclude the case $D=2$ with $p=1$, since Einstein's equations imply the trivial condition $\BF=0$. This is why we restrict ourselves to $D\ge3$.} characterized by the action 
\begin{equation}\label{action}
S[\Bg,\BA] =  \int \textnormal{d}^D x \sqrt{-g} \mathcal{L},
\end{equation}
with a Lagrangian $\mathcal{L}$ of the form
\begin{equation}\label{lagrangian}
\mathcal{L} \equiv \mathcal{L}_{grav}(\BR,\nabla \BR, \dots)+ \mathcal{L}_{elmag}(\BF, \nabla \BF, \dots) + \mathcal{L}_{int}(\BR,\nabla \BR, \dots, \BF, \nabla \BF, \dots) .
\end{equation}
Here, the individual parts of $\mathcal{L}$ are scalars constructed from the corresponding tensors: $\BR$ denotes the Riemann tensor of the metric $\Bg$, and $\BF$ denotes the field strength of the electromagnetic potential $(p-1)$-form $\BA$, i.e. $\BF = \textnormal{d}\BA$. 
We assume that the individual parts of $\mathcal{L}$ satisfy:
\begin{itemize}
\item  $\mathcal{L}_{grav}$ is a function of scalar polynomial curvature invariants $\{I_i\}$ constructed from $\BR$ and its covariant derivatives $\nabla^{(k)}\BR$ of arbitrary order (suitably contracted with the the metric and, possibly, the volume element).  Moreover, $\mathcal{L}_{grav}(I_1, I_2, \dots)$ is analytic at zero with a Taylor expansion of the form 
\begin{equation}
 \mathcal{L}_{grav} = \mathcal{L}_{EH} + \mathcal{L}_{GC},
 \label{exp_grav}
\end{equation}
where 
\be
 16\pi\mathcal{L}_{EH}=R-2\Lambda 
\ee 
defines the Einstein-Hilbert Lagrangian (we have set $G=1=c$), and $\mathcal{L}_{GC}$ (``Gravity Corrections'') consists strictly of higher order (i.e., greater than two) curvature monomials.\footnote{Following the terminology of \cite{Fullingetal92}, throughout the paper by ``order'' we indicate the number of differentiations of the metric/vector potential (so, for example, in terms containing the curvature, each factor $\BR$ contributes a term 2 and each explicit covariant derivative a term 1 \cite{Fullingetal92}). Two quantities of the same order have thus the same physical dimensions. Most importantly, the field variation of an invariant of order $n$ (in our case, w.r.t. $\Bg$ or $\BA$) yields a tensor again of {\em the same} order $n$.} This means that the possible monomials are at least quadratic in $\BR$ or contain derivatives $\nabla^{(k)} \BR$.

\item  $\mathcal{L}_{elmag}$ is a function of scalar polynomial electromagnetic invariants $\{J_j\}$ constructed from $\BF$ and $\nabla^{(k)}\BF$ of arbitrary order. Moreover, 
 $\mathcal{L}_{elmag}(J_1, J_2, \dots)$ is analytic at zero with a Taylor expansion of the form
\begin{equation}
 \mathcal{L}_{elmag} = \mathcal{L}_{M} + \mathcal{L}_{EC},
\label{exp_Maxw}
\end{equation}
where 
\be
 16\pi\mathcal{L}_{M}=-\frac{\kappa_0}{p}F^2   \qquad (F^2=F_{ab\ldots c}F^{ab\ldots c}) ,
 \label{L_M}
\ee 
defines the source-free Maxwell Lagrangian, and $\mathcal{L}_{EC}$ (``Electromagnetic Corrections'') consists strictly of higher order  (i.e., greater than two) monomials. 

\item  $\mathcal{L}_{int}$ is a function of mixed invariants $\{K_k\}$ (i.e., scalar monomials each containing both $\BR,\nabla \BR, \dots$ and $\BF,\nabla \BF, \dots$) and satisfies $\mathcal{L}_{int}(0) = 0$.  
\end{itemize}

The above assumptions ensure that when the invariants entering $\mathcal{L}$ are small, $\mathcal{L}$ approaches the standard Einstein-Maxwell $p$-form Lagrangian, i.e., $16\pi\mathcal{L}\approx R-2\Lambda -\frac{\kappa_0}{p}F^2$. However, $\mathcal{L}$ is not assumed to be analytic everywhere -- as is the case for some of the theories mentioned in remark~\ref{rem_theories} below.

\begin{remark}[Theories contained in our definition]
\label{rem_theories}
The class of theories encompassed by \eqref{action}, \eqref{lagrangian} (with \eqref{exp_grav}--\eqref{L_M}) is rather broad. It naturally includes Einstein's gravity coupled to NLE \cite{Peres61} for arbitrary $D$ and $p$ (see section~\ref{subsubsec_algebr} below). Obviously, it also contains theories with arbitrary polynomial higher-order corrections, such as generic Lovelock \cite{Lovelock71} or any quadratic gravity \cite{Weyl19,Eddington_book,Lanczos38,Buchdahl48}  in the gravitational sector, or Bopp-Podolsky electrodynamics \cite{Bopp40,Podolsky42} in the electromagnetic sector. Nonlinear theories such as $f(R)$ \cite{Buchdahl70} and, more generally, $f(\textnormal{Riemann})$ \cite{Deruelleetal10}, or Born-Infeld inspired modifications of gravities \cite{DesGib98} coupled to generalized electrodynamics (such as NLE and their various generalizations) are also encompassed. Another special class of theories covered by \eqref{action} are then non-minimally extended Einstein-Maxwell theories (see, e.g., \cite{Prasanna71} for an early discussion).

Also some theories {\em not} encompassed by our assumptions are worth mentioning. These are typically theories without the Einstein term in the gravity sector, such as conformal gravity \cite{Bach21} or any Lovelock gravity containing only quadratic or higher powers of $\BR$ (e.g., pure Gauss-Bonnet gravity). We observe that also theories containing an electromagnetic Chern-Simons (CS) term (possible for $D=p(k+1)-1$, where $k\ge1$ -- cf., e.g., \cite{CreJulSch78,Banadosetal97}) are not comprised in our definition. However, since any null $\bF$ 
satisfies identically $\bF\wedge\bF=0$, CS corrections to the Maxwell equations with $k\ge2$ vanishes identically for null fields \cite{FigPap01,OrtPra16}. The energy-momentum tensor is also unaffected, therefore the solution of theorems~\ref{minuniversal} and \ref {nonminuniversal} are also immune to CS corrections.  In the special case $k=1$, CS corrections to the Maxwell equations are instead linear and therefore a non-zero solution of standard Maxwell's theory cannot solve those.
 \end{remark}

\subsection{Field equations}
Variation of action \eqref{action} with respect to the fields $\Bg$ and $\BA$ yields the following equations of motion
\begin{equation}\label{EM1}
G^{grav}_{ab} + G^{int}_{ab} =  8\pi T^{elmag}_{ab},
\end{equation}
\begin{equation}\label{EM2}
\nabla^{a} H^{elmag}_{a b \dots c} + \nabla^{a} H^{int}_{a b \dots c} = 0.
\end{equation}
From the Taylor expansion \eqref{exp_grav} and \eqref{exp_Maxw} of $\mathcal{L}_{grav}$ and $\mathcal{L}_{elmag}$, respectively, the following expressions for the individual tensors in \eqref{EM1}, \eqref{EM2} follow (within the radii of convergence of the Taylor series): 
\begin{align}
&G^{grav}_{ab} = G_{ab} + \Lambda g_{ab}  +  G^{GC}_{ab},& 
&G^{GC}_{ab} \equiv \frac{16\pi}{\sqrt{-g}}\frac{\delta(\sqrt{-g}\mathcal{L}_{GC})}{\delta g^{ab}}, \\
&T^{elmag}_{ab} = T^{M}_{ab} + T^{EC}_{ab},& 
&T^{EC}_{ab} \equiv \frac{-2}{\sqrt{-g}}\frac{\delta(\sqrt{-g}\mathcal{L}_{EC})}{\delta g^{ab}}, \\
&\nabla^a H^{elmag}_{a b \dots c} =  \nabla^a F_{a b \dots c} + \nabla^a H^{EC}_{a b \dots c},& 
&\nabla^a H^{EC}_{a b \dots c} \equiv \frac{8 \pi}{\kappa_0}\frac{\delta L_{EC}}{\delta A^{ b \dots c}},
\end{align}
where
\be
 G_{ab}\equiv\frac{16\pi}{\sqrt{-g}}\frac{\delta(\sqrt{-g}\mathcal{L}_{EH})}{\delta g^{ab}}=R_{ab}-\frac{1}{2}Rg_{ab} , 
\ee 
\be
 T^{M}_{ab} \equiv \frac{-2}{\sqrt{-g}}\frac{\delta(\sqrt{-g}\mathcal{L}_{M})}{\delta g^{ab}}=\frac{\kappa_{0}}{8\pi}\left(F_{a c \dots d}F\indices{_{b}^{c \dots d}} - \frac{1}{2p}g_{ab} F^2\right) , 
 \label{TM}
\ee 
are the Einstein tensor and the part of the energy-momentum tensor coming from the standard Maxwell term.
The interaction tensors $\BG^{int}$, $\textnormal{div}\BH^{int}$ are then a symmetric and skew-symmetric tensor obtained by the field variation of $\mathcal{L}_{int}$ with respect to $\Bg$ and $\BA$, respectively. The explicit form of variations of $ \mathcal{L}_{GC},  \mathcal{L}_{EC}$ and $\mathcal{L}_{int}$ evaluated on $VSI$ fields (which suffices for our purposes) is given in appendix~\ref{app_variations} (expressions \eqref{var1}--\eqref{var5}).

\begin{remark}[Simplifications of CSI and VSI fields] \label{CSIvariations}
When evaluated on fields $(\Bg,\BF)$ with constant scalar invariants ($CSI$), variations of Lagrangians $ \mathcal{L}_{grav},  \mathcal{L}_{elmag}$ and $\mathcal{L}_{int}$ being functions of the corresponding scalar polynomial invariants $\{I_i\},\{J_j\}$ and $\{K_k\}$, respectively, reduce to a linear combination (with constant coefficients) of variations of these scalar invariants (see appendix \ref{varapp}). This means that the fields equations reduce considerably for such fields -- in particular, it enables one to study (in general complicated) higher-order theories in the context of $CSI$ fields just by studying field variations of the individual scalar polynomial invariants, independently of the specific functional dependence of the Lagrangian. Further simplification occurs in the case of VSI fields (clearly a subset of CSI fields). In the next section, this strategy will be employed in the proofs of the main results. 
\end{remark}

\section{Solutions with vanishing higher-order corrections}
\label{universalsolutions}
We will show that under certain assumptions on a solution $(\Bg,\BF)$ of the Einstein-Maxwell equations, the tensors $\BG^{GC},  \BG^{int}, \BT^{EC}, \textnormal{div} \BH^{EC}$ and $\textnormal{div} \BH^{int}$, representing higher-order corrections to the Einstein-Maxwell theory, vanish identically. 
Minimally coupled ($\mathcal{L}_{int} = 0$) and non-minimally coupled ($\mathcal{L}_{int} \neq 0$) theories will be treated separately.   

\subsection{Minimally coupled theories}\label{mincoupled}
In the minimally coupled case one has
\be
 \mathcal{L}_{int} = 0 ,
 \label{minim}
\ee 
so that the interaction tensors $\BG_{int}$ and $\BH_{int}$ are not present in the field equations \eqref{EM1}, \eqref{EM2}. 
Consequently,  we shall deal with simpler higher-order corrections to the Einstein-Maxwell system.

\begin{theorem}[Solutions with vanishing corrections]
\label{minuniversal}
Let $(\Bg,\BF)$ be a solution of the Einstein-Maxwell theory with a non-vanishing $\BF$ and \eqref{action}, \eqref{lagrangian} be a minimally coupled theory (i.e. with $\mathcal{L}_{int}=0$) satisfying the assumptions outlined in section~\ref{subsec_lagrangian}. Then, the following statements are equivalent: 
\begin{enumerate}
\item \label{x} All higher-order corrections of \eqref{action} to the Einstein-Maxwell theory vanish for $(\Bg,\BF)$. 
\item $(\Bg,\BF)$ are $VSI$ fields and satisfy $C_{acde}C\indices{_{b}^{cde}} = 0$ and $\nabla_c F_{ad \dots e} \nabla^c F\indices{_b^{d \dots e}} = 0$. 
\end{enumerate}
\end{theorem}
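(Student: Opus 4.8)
The plan is to prove the two implications separately, in both cases exploiting the drastic simplification recorded in Remark~\ref{CSIvariations}: on a $VSI$ field the variation of any admissible Lagrangian collapses to a \emph{finite} linear combination, with constant coefficients, of the field variations of the individual scalar monomials entering it, for which the explicit expressions \eqref{var1}--\eqref{var5} are available. Thus the whole problem is reduced to controlling, monomial by monomial, the three correction tensors $\BG^{GC}$, $\BT^{EC}$ and $\textnormal{div}\,\BH^{EC}$ evaluated on such fields.

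For the implication \eqref{x}$\Leftarrow$ I would start from the known structure of $VSI$ fields: the spacetime is a degenerate Kundt metric of Weyl type III (or more special) admitting a recurrent null direction $\bl$, and $\BF$ is null and aligned with $\bl$; in particular $\Lambda=0$ follows from the trace of Einstein's equations since $R=0=F^2$. I would then organise the computation of each monomial variation by boost weight with respect to $\bl$. Because every factor ($\BR$, $\BF$ and their covariant derivatives) carries strictly negative boost weight, almost all contributions to the symmetric tensors $\BG^{GC}$, $\BT^{EC}$ and to $\textnormal{div}\,\BH^{EC}$ vanish identically; the only candidates that can survive are the ``square'' terms of leading boost weight, which the explicit formulas \eqref{var1}--\eqref{var5} show to be proportional to $C_{acde}C\indices{_b^{cde}}$ and $\nabla_c F_{ad\dots e}\nabla^c F\indices{_b^{d\dots e}}$ (together with already-vanishing pieces). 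The two tensorial hypotheses then kill exactly these, so all corrections vanish.

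For the converse \eqref{x}$\Rightarrow$ I would use that the corrections must vanish for \emph{every} theory in the class, in particular for all polynomial ones. Since the variation is linear in the Lagrangian for a fixed monomial structure and the coefficients may be chosen freely, the hypothesis is equivalent to the field variation of each individual higher-order monomial invariant vanishing on $(\Bg,\BF)$. I would first establish the $VSI$ property by feeding in a well-chosen family of test Lagrangians (for instance powers of a given nonzero invariant, and products of an invariant with the Einstein--Hilbert and Maxwell terms) and showing that a nonvanishing scalar invariant would produce a nonzero algebraic contribution to $\BG^{GC}$ or $\BT^{EC}$ that cannot be cancelled by the accompanying derivative terms. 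Once $VSI$ is in force, I would again use \eqref{var1}--\eqref{var5} to exhibit specific monomials whose variation reduces, on $VSI$ fields, to a nonzero multiple of $C_{acde}C\indices{_b^{cde}}$ (respectively $\nabla_c F_{ad\dots e}\nabla^c F\indices{_b^{d\dots e}}$) plus terms already known to vanish; the assumed vanishing of those corrections then forces the two tensorial identities.

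I expect the main obstacle to lie in the necessity direction \eqref{x}$\Rightarrow$, and specifically in the step that upgrades ``constant scalar invariants'' to genuinely \emph{vanishing} ones, and in the construction of the precise test monomials that isolate $C_{acde}C\indices{_b^{cde}}$ and $\nabla_c F_{ad\dots e}\nabla^c F\indices{_b^{d\dots e}}$ from the many other contributions to the variations. Disentangling the algebraic parts of the variations from the derivative parts, which are of the same order and can a priori conspire to cancel, is the delicate point, and is presumably where the technical results of the appendices on the variations and on the boost-weight structure of $VSI$ fields do the real work.
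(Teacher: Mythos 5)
Your sketch of the sufficiency direction (ii)$\Rightarrow$(i) contains two concrete errors. First, you attribute a recurrent null direction to ``the known structure of $VSI$ fields'': $VSI$ spacetimes are Kundt, but in general $\tau_i\neq 0$, so $\bl$ need not be recurrent. In the paper, recurrence is deduced from the extra hypothesis $\nabla_c F_{ad\dots e}\nabla^c F\indices{_b^{d\dots e}}=0$, which for an aligned null Maxwell field is equivalent to $\dho_i\phip_{j\dots k}=0$ and then forces $\tau_i=0$ through Maxwell's equations (remark~\ref{rem_DFDF_2}). Second, and more seriously, boost-weight counting does not reduce the surviving contributions to multiples of $C_{acde}C\indices{_b^{cde}}$ and $\nabla_c F_{ad\dots e}\nabla^c F\indices{_b^{d\dots e}}$: a symmetric rank-2 tensor is allowed to have components of b.w. $-2$ (proportional to $\ell_a\ell_b$), so counting alone leaves alive, e.g., $\Box^n S_{ab}$ for every $n$, arbitrary rank-2 contractions of $\nabla^{(k)}\BC\otimes\nabla^{(l)}\BC$, and terms linear in $\nabla^{(k)}\BF$ --- none of which is literally one of the two hypothesized tensors; moreover \eqref{var1}--\eqref{var5} do not ``show'' what you claim, they only express the corrections as constant linear combinations of variations of individual monomials. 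Disposing of these terms is precisely where the paper's work lies: the condition $\nabla F\nabla F=0$ yields 1-balancedness of $\nabla\BF$ (lemma~\ref{1balcriteria} with remark~\ref{rem_DFDF_2}); theorem~\ref{Tform} --- whose proof requires recurrence, $C_{acde}C\indices{_b^{cde}}=0$, the Bianchi identities and commutator manipulations, not boost-weight counting --- reduces $\BG^{GC}$ to $\sum_n a_n\Box^n S_{ab}$, after which 1-balancedness and the Weitzenb\"ock identity give $\Box\BS=0$; lemma~\ref{rank2F} (resting on lemmas~\ref{formlemma} and \ref{rank2FC}) kills $\BT^{EC}$, and theorem~2.5 of \cite{HerOrtPra18} kills $\textnormal{div}\,\BH^{EC}$. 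Without this machinery your argument for (ii)$\Rightarrow$(i) does not go through.

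In the necessity direction (i)$\Rightarrow$(ii) your strategy --- test Lagrangians built from powers of invariants and products with the Einstein--Hilbert and Maxwell terms --- is essentially the paper's, but the ``delicate point'' you defer (algebraic versus derivative parts of the variation conspiring to cancel) is exactly where the proof is decided, and it is resolved not by estimates but by a specific ordering of the test Lagrangians. The trace of the correction from $\mathcal{L}_{EC}=J_1^N$ forces $J_1=0$ identically; once $J_1\equiv 0$ as a function on spacetime, every derivative term in the variation of $\mathcal{L}_{EC}=J_1 I$ carries a factor $J_1$ or $\nabla^{(k)}J_1$ and so vanishes, leaving $\BT^{EC}\propto I\,\BT^{M}$ with $\BT^M\neq0$, hence $I=0$ and $\BF$ is $VSI$. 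On the gravity side the analogous step needs an intermediate stage you omit: $\mathcal{L}_{GC}=R^2$ gives $R=0$, then the $CSI$ property of $\Bg$ must be established (theorem~3.2 of \cite{HerPraPra14}, using $\Tr\BG^{GC}=0$) before varying $RI$ yields $VSI$. Finally, the specific monomials isolating the tensorial conditions are $R_{ab}R^{ab}$ and $R_{abcd}R^{abcd}$, whose vanishing corrections give $\Box S_{ab}=0$ and $C_{acde}C\indices{_b^{cde}}=0$, with $\Box S_{ab}=0$ converted into $\nabla_c F_{ad\dots e}\nabla^c F\indices{_b^{d\dots e}}=0$ via the Weitzenb\"ock identity and Einstein's equations ($S_{ab}=\kappa_0F_{ac\dots d}F\indices{_b^{c\dots d}}$). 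You flagged this gap yourself, and it is fixable along these lines, but as it stands the proposal does not contain the proof.
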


\begin{remark}
\label{rem_ii}
First of all, let us note that the VSI property in condition~(ii) of theorem~\ref{minuniversal} requires the cosmological constant $\Lambda$ to be zero. Condition~(ii) also implies that the spacetime is of Weyl type III \cite{Coleyetal04vsi} and admits a {\em recurrent} multiple Weyl aligned null direction (mWAND) $\bl$ {\em aligned} with $\bF$ (see remark~\ref{rem_DFDF_2}), thus being Kundt. Note also that the condition $C_{acde}C\indices{_{b}^{cde}} = 0$ can be traced back to the vanishing of the Gauss-Bonnet term in the gravitational field equations (as such, it has been discussed in related contexts, e.g., in \cite{PraPra08,MalPra11prd,HerPraPra14,Ortaggio18prd}). We further emphasize that it is satisfied identically by VSI spacetimes in $D=4$ dimensions, thanks to the well-known four-dimensional identity $C_{acde}C^{bcde}=\frac{1}{4}(C_{cdef}C^{cdef})\delta_a^b$. For $D=3$ it is also trivial since $C_{abcd}=0$ identically.
\end{remark}

\begin{proof}
Let us first show that $(i)$ implies $(ii)$. 
Consider the $2N$-th order Lagrangian $\mathcal{L}_{EC} \equiv J_1^N$ ($N>1$), where $J_1 = F_{a \dots b} F^{a \dots b}$. The condition $\BT^{EC} = 0$ implies that the trace $\Tr \BT^{EC} = -2 (Np - D/2) J_1^N$ has to vanish and hence necessarily $J_1 = 0$, since $N$ can be chosen arbitrarily.  
Now, one can take $\mathcal{L}_{EC} \equiv J_1 I$, where $I$ is an arbitrary scalar polynomial invariant of $\BF$ and its covariant derivatives. Thanks to $J_1 = 0$, the corresponding correction reduces to $\BT^{EC} \propto I \BT^{M}$ for our field $\BF$, where $\BT^M$ is the standard Maxwell energy-momentum tensor \eqref{TM} (which is necessarily non-zero since $\BF\neq0$). Hence, also $I$ has to vanish and, since it was an arbitrary invariant, $\BF$ is $VSI$. In particular, it is null and the metric $\Bg$ is (degenerate) Kundt of traceless Ricci type N with constant Ricci scalar \cite{OrtPra16}. The condition $\BG^{GC} = 0$ then implies that also $\Bg$ has to be $VSI$. 
Indeed, considering $\mathcal{L}_{GC} = R^2$, we get $R=0$.\footnote{We do not reproduce here the tensors produced by variation of such kinds of Lagrangians w.r.t. the metric since they have been well-known for some time \cite{Buchdahl48,DeWittbook}. The same comment applies also to the other quadratic terms mentioned in the following.}
 This suffices to conclude that $\Bg$ is $CSI$, as immediately follows from (the proof of) theorem~3.2 of \cite{HerPraPra14} (using $\Tr\BG^{GC} = 0$). 
Then, varying $\mathcal{L}_{GC} = R I$ with $I$ being an arbitrary scalar polynomial curvature invariant (also using $R=0$ and the $CSI$ property of $\Bg$), one obtains that $I$ has to vanish as well, i.e. $\Bg$ is truly $VSI$. In particular, $\Bg$ is  of aligned Weyl type III and Ricci type N \cite{Coleyetal04vsi} (in addition to being degenerate Kundt). 
In view of the results obtained so far, varying the higher-order invariants $R_{ab}R^{ab}$ and $R_{abcd}R^{abcd}$ and demanding that such corrections also vanish, we obtain that $\Box S_{ab}$, and consequently also $C_{acde}C\indices{_{b}^{cde}}$, vanishes. 
Under the given conditions on $(\Bg,\BF)$, the Weitzenb\"ock identity implies $\Box \BF = 0$ (cf. eq.~(12) of \cite{HerOrtPra18}). Since here $S_{ab}=\kappa_0F_{a c \dots d}F\indices{_{b}^{c \dots d}}$ (by Einstein's equations with null $\BF$), we have that $\Box S_{ab}=0$ iff $\nabla_c F_{ad \dots e} \nabla^c F\indices{_b^{d \dots e}} = 0$, which completes the first part of the proof.

Now we will prove that $(ii)$ implies $(i)$. First, both fields are $VSI$, thus,  as pointed out in remark \ref{CSIvariations}, all higher-order corrections of \eqref{action} reduce to a linear combination of variations of the individual polynomial invariants $I_k,J_k,K_i$ (see expressions \eqref{var1}--\eqref{var5} and the discussion below those). Hence, the discussion can be without loss of generality restricted to polynomial higher-order corrections $\BG^{GC}, \BT^{EC}$ and $\BH^{EC}$. 
Now, according to theorem~1 of \cite{Coleyetal04vsi}, $\Bg$ is of aligned Weyl type III and Ricci type N,  and thus also aligned with the $VSI$ form $\BF$ (thanks to Einstein's equations). Theorem 2.5 of \cite{HerOrtPra18} then implies $\textnormal{div} \BH^{EC} = 0$.  
In view of theorem \ref{1balcriteria} (with remark~\ref{rem_DFDF_2}),  $\nabla \BF$ is 1-balanced, all conditions of lemma \ref{rank2F} are satisfied and consequently $\BT^{EC} = 0$ (recall that $\BT^{EC}$ has order greater than two). It remains to show that $\BG^{GC}$ vanishes as well. Since $\BF$ is a null Maxwell field aligned with a Kundt null direction $\Bell$, from remark~\ref{rem_DFDF_2} we get $\tau_i=0$, i.e. $\Bell$ is recurrent. Theorem~\ref{Tform} thus guarantees that $\BG^{GC}$ takes the form 
\begin{equation}\label{formaG}
G^{GC}_{ab} = \sum_{n=0}^N a_n \Box^n S_{ab}.
\end{equation}
As noticed above, here $S_{ab}=\kappa_0F_{a c \dots d}F\indices{_{b}^{c \dots d}}$. Hence, $1$-balancedness of $\nabla \BF$ implies $\Box \BS = 0$ and we are left with $G^{GC}_{ab} = a_0 S_{ab}$. But since $\mathcal{L}_{GC}$ is a higher-order scalar, $a_0$ must be a non-trivial curvature invariant and hence vanishes due to the $VSI$ property of $\Bg$.
\end{proof}

\subsubsection{Algebraic corrections to the Maxwell Lagrangian}
\label{subsubsec_algebr}
A subclass of theories of particular interest consists of standard General Relativity coupled to generalized electrodynamics, for which the higher-order corrections are assumed to be only {\em algebraic}. This includes, in particular, the well-known case of NLE in four dimensions \cite{Peres61}. Let us thus consider Einstein-generalized Maxwell theories with algebraic corrections, i.e. a subclass of theories \eqref{action}, \eqref{lagrangian} for which the expansion~\eqref{exp_grav}, \eqref{exp_Maxw} reduces to
\begin{equation}
\mathcal{L} = \mathcal{L}_{EH} + \mathcal{L}_{M} + \mathcal{L}_{EC},
\end{equation} 
where  $\mathcal{L}_{EC}$ is a (higher-order) function of the \textit{algebraic} invariants $\{J_j\}$ only (i.e. those constructed solely from $\BF$ and its dual, and not their covariant derivatives).

\begin{theorem}[Einstein gravity with algebraically corrected electrodynamics]\label{algebraic}
Let $(\Bg,\BF)$ be a solution of the Einstein-Maxwell equations with non-vanishing $\BF$. Then, $(\Bg,\BF)$ solves Einstein gravity coupled to any generalized Maxwell theory with higher-order algebraic corrections if and only if $\BF$ is null. 
\end{theorem}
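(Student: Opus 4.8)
The plan is to specialize the strategy from the first half of the proof of theorem~\ref{minuniversal} to this simpler setting. Since the gravitational sector is now unmodified and all corrections are algebraic and purely electromagnetic, the only field equations to check are $\BT^{EC}=0$ and $\textnormal{div}\BH^{EC}=0$, and both $\BT^{EC}$ and $\BH^{EC}$ are algebraic functionals of $\BF$ alone. A welcome simplification, compared with theorem~\ref{minuniversal}, is that no condition on the Weyl type of $\Bg$ is required: the whole argument becomes a statement about null $p$-forms, insensitive to the curvature (beyond Maxwell's equations and Einstein's equations, the latter giving $S_{ab}=\kappa_0 F_{ac\dots d}F\indices{_b^{c\dots d}}$). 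I would prove the two implications separately.

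For ``$\Rightarrow$'', I would probe the theory with a one-parameter family of admissible corrections. First, choosing $\mathcal{L}_{EC}=J_1^{N}$ with $J_1=F^2$ and $N>1$, the vanishing of $\BT^{EC}$ forces its trace $\Tr\BT^{EC}=-2(Np-D/2)\,J_1^{N}$ to vanish; as $N$ is arbitrary this yields $F^2=0$, exactly as in theorem~\ref{minuniversal}. Next I would take $\mathcal{L}_{EC}=J_1 J$ with $J$ an \emph{arbitrary} algebraic invariant of $\BF$. On the solution $J_1=0$, so the Leibniz rule leaves only the term in which the metric variation acts on $J_1$ (the explicit $g_{ab}\mathcal{L}_{EC}$ piece also dropping), giving $\BT^{EC}\propto J\,\BT^{M}$ with $\BT^{M}$ the Maxwell stress tensor~\eqref{TM}. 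Since $\BF\neq0$ and $F^2=0$ force $\BT^{M}_{ab}\propto F_{ac\dots d}F\indices{_b^{c\dots d}}\neq0$, we conclude $J=0$. As $J$ was arbitrary, every algebraic invariant vanishes and $\BF$ is null.

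For ``$\Leftarrow$'', assume $\BF$ null, so that all algebraic invariants vanish identically. Expanding $\mathcal{L}_{EC}$ into monomials in the $J_j$ and varying, any monomial that is a product of at least two invariants leaves a spectator factor $J_j=0$ after variation; the only monomials requiring attention are lone invariants of order greater than two, that is, built from at least three copies of $\BF$. For such a $J$ the metric variation $\partial J/\partial g^{ab}$ is a rank-two tensor assembled from $\ge3$ copies of a null form, and a boost-weight count shows it vanishes: in a null frame the only nonzero components of $\BF$ carry exactly one index along the aligned null direction $\bl$, and $g(\bl,\bl)=0=g(\bl,\Bm_{(i)})$ makes any such index vanish once contracted, so with three or more copies but only two free slots at least one must be contracted. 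This gives $\BT^{EC}=0$. The step I expect to be the main obstacle is $\textnormal{div}\BH^{EC}=0$, since here a derivative acts on an algebraic functional of $\BF$ and a naive frame count on $\BH^{EC}=\partial\mathcal{L}_{EC}/\partial\BF$ alone is not conclusive. My plan is to show that on a null field $\BH^{EC}$ is aligned with $\BF$ --- proportional to $\BF$ up to an algebraic invariant that vanishes, or otherwise annihilated by the same boost-weight structure --- so that $\textnormal{div}\BH^{EC}$ is controlled by $\nabla^a F_{a\dots}=0$ together with the fact that all scalar coefficients are identically zero on a globally null field. This is precisely the $p$-form, arbitrary-$D$ generalization of the classical fact that null Maxwell fields solve any nonlinear electrodynamics, and the relevant statement for algebraic theories is available in the cited literature.
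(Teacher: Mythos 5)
Your proposal is correct and takes essentially the same route as the paper: the same probe Lagrangians $J_1^N$ and $J_1 I$ establish nullity of $\BF$ in the ``only if'' direction, and the same at-least-cubic-in-a-null-form counting kills $\BT^{EC}$ in the ``if'' direction. The step you defer to the literature, $\textnormal{div}\,\BH^{EC}=0$, is exactly what the paper also handles by citation (Proposition~2.4 of \cite{HerOrtPra18}), and of your two hedged mechanisms it is the boost-weight one that operates: $\BH^{EC}$ is a $p$-form built from at least two copies of a null form, hence of boost order $\le -2$, which forces it to vanish identically.
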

\begin{proof}
To prove that $F$ is necessarily null, we can proceed similarly as in the proof of theorem \ref{minuniversal}. By considering $\mathcal{L}_{EC} \equiv J_1^N$, where $J_1 \equiv F_{a \dots b} F^{a \dots b}$, one obtains $J_1 = 0$ for a suitable choice of $N$. Now, the Lagrangian $\mathcal{L}_{EC} \equiv J_1 I$, where $I$ is an arbitrary algebraic polynomial invariant of $\BF$, is clearly an admissible correction. Since $J_1$ vanishes for $\BF$, we have $\BT^{EC} \propto I \BT^M$, which implies that also $I = 0$. Therefore, all algebraic invariants of $\BF$ must vanish, i.e., $\BF$ is null \cite{Hervik11,OrtPra16}.

On the other hand, since all algebraic invariants $\{J_j\}$ of any null field $\BF$ vanish, the tensors $\BH^{EC}$ and $\BT^{EC}$ again effectively reduce to {\em polynomial} higher-order corrections (cf. appendix~\ref{app_variations}). Therefore,  we have $\textnormal{div} \BH^{EC}=0$ thanks to Proposition~2.4 of \cite{HerOrtPra18}. In addition, since any higher-order algebraic polynomial $\BT^{EC}$ has to be at least cubic in $\BF$, one also immediately obtains $\BT^{EC} = 0$, i.e. all algebraic higher-order corrections vanish trivially. 
\end{proof}

Hence, we observe that null Einstein-Maxwell fields are indeed of particular importance in the context of higher-order theories. It is worth emphasizing that, in this case, the metric is restricted neither to be of Weyl type III nor Kundt, and $\Lambda$ can be non-zero, thus allowing for more general spacetimes. Many such solutions are known in the case $D=4=2p$ (cf. \cite{Stephanibook} and references therein). In higher dimensions, some non-Kundt solutions have been presented, e.g., in \cite{OrtPodZof15} (when  $D=2p$). A simple Weyl type D example with $D=6=2p$ is given by \cite{OrtPodZof15}
\beqn
  & & \d s^2=r^{2}\delta_{ij}\d x^i\d x^j{+}2\d u\d r+\left(\frac{\Lambda}{10}r^2+\frac{\mu(u)}{r^3}\right)\d u^2  \qquad (i,j,\ldots=2,\ldots,5) \\
	& & \boldsymbol{F} = \frac{1}{2} f_{ij}(u) \de u \wedge \de x^{i} \wedge \de x^{j}  , \qquad \mu(u)=\mu_0{+}\frac{\kappa_0}{2}\int(f_{ij}f^{ij})\d u ,
\eeqn
where $\mu_0$ is a constant, which describes (for $\Lambda<0$) the formation of asymptotically locally AdS black holes by collapse of electromagnetic radiation with non-zero expansion.

Note also that, for the case $D=4=2p$, it was already known to Schr\"odinger that all null Maxwell fields automatically solve any NLE \cite{Schroedinger35,Schroedinger43}, while the fact that all null solutions of the Einstein-Maxwell theory solve also General Relativity coupled to any NLE was pointed out in the early 60's \cite{Peres61} (see also \cite{Kichenassamy59,KreKic60}).

\subsection{Non-minimally coupled theories}\label{sec_nonminuniversal}

In this section, we show that the Einstein-Maxwell solutions studied in section \ref{minuniversal} are free from higher order corrections also in the context of a wider class of {\em non-minimally coupled} theories -- that is, also the interaction part of the field equations \eqref{EM1}, \eqref{EM2} amounting to  $\mathcal{L}_{int}$ vanishes identically for these Einstein-Maxwell fields.

\begin{theorem}\label{nonminuniversal}
The Einstein-Maxwell fields  with vanishing higher-order corrections of section~\ref{mincoupled} solve also all non-minimally coupled theories 
\eqref{action}, \eqref{lagrangian}. 
\end{theorem}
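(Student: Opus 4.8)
The plan is to reduce the statement to showing that the two interaction tensors $\BG^{int}$ and $\textnormal{div}\,\BH^{int}$ vanish on any field $(\Bg,\BF)$ satisfying condition~(ii) of theorem~\ref{minuniversal}: by that theorem the purely gravitational and purely electromagnetic corrections $\BG^{GC}$, $\BT^{EC}$ and $\textnormal{div}\,\BH^{EC}$ already vanish, so the interaction part of \eqref{EM1}, \eqref{EM2} is all that remains to control. Since $(\Bg,\BF)$ are $VSI$ (hence $CSI$), remark~\ref{CSIvariations} lets me replace $\mathcal{L}_{int}$ by a constant-coefficient linear combination of field variations of the individual mixed invariants $K_k$, using the explicit VSI expressions \eqref{var1}--\eqref{var5}. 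It therefore suffices to treat a single mixed monomial $K$, which by definition contains at least one curvature factor (order $\ge2$) and at least one electromagnetic factor (order $\ge1$), and is thus a higher-order scalar of total order $\ge3$.

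For the metric variation I would argue exactly as in the $\BG^{GC}$ step of the proof of theorem~\ref{minuniversal}. The background is $VSI$ with a recurrent mWAND $\Bell$ (remark~\ref{rem_DFDF_2}), so I expect theorem~\ref{Tform} --- being a statement about symmetric rank-two tensors obtained by metric variation on such a geometry, independently of which part of $\mathcal{L}$ they come from --- to give $G^{int}_{ab}=\sum_{n} a_n\,\Box^{n} S_{ab}$. Condition~(ii) yields $\nabla_c F_{ad\dots e}\nabla^{c}F\indices{_b^{d\dots e}}=0$, i.e.\ $\Box\BS=0$ (recall $S_{ab}=\kappa_0 F_{ac\dots d}F\indices{_b^{c\dots d}}$), which kills every $n\ge1$ term and leaves $a_0 S_{ab}$. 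Because $K$ is higher order, order matching forces $a_0$ to be a nontrivial scalar invariant (of order $\ge1$), which vanishes by the $VSI$ property of $(\Bg,\BF)$; hence $\BG^{int}=0$.

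For the potential variation I would proceed along the lines of the $\textnormal{div}\,\BH^{EC}$ step. Varying $K$ with respect to $\BA$ produces a $p$-form in which each term still carries at least one (balanced) curvature factor alongside the electromagnetic factors; since $\nabla\BF$ is $1$-balanced (theorem~\ref{1balcriteria} with remark~\ref{rem_DFDF_2}) and the Weyl and tracefree Ricci tensors are themselves balanced, every such term has strictly more negative boost weight than the corresponding term in $\textnormal{div}\,\BH^{EC}$. The boost-weight/balance argument behind theorem~2.5 of \cite{HerOrtPra18} then applies \emph{a fortiori} and gives $\textnormal{div}\,\BH^{int}=0$. Together with $\BG^{int}=0$ this shows that the interaction tensors vanish, so the field solves the full non-minimally coupled theory.

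The hard part is not any single computation but making sure the two structural inputs genuinely cover the \emph{mixed} tensors. For $\BG^{int}$ one must confirm that theorem~\ref{Tform} is formulated for an arbitrary symmetric rank-two tensor built from $\BC$, $\BS$, $\BF$ and their derivatives on the given Kundt background, rather than specifically for $\BG^{GC}$; for $\textnormal{div}\,\BH^{int}$ one must track the integration-by-parts terms in the $\BA$-variation and verify that the derivative redistribution never strips a term of the negative boost weight needed for the \cite{HerOrtPra18} argument. Checking, via the appendix formulas \eqref{var1}--\eqref{var5}, that every term of both variations retains enough balance (equivalently, carries at least one curvature factor or sufficiently many covariant derivatives of $\BF$) is the real content of the proof.
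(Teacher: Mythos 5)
Your reduction (pass to polynomial mixed monomials $K$ via remark~\ref{CSIvariations}, invoke theorem~\ref{minuniversal} for the non-interaction terms) is exactly the paper's, but both steps you defer as ``the hard part'' are genuine failures of the proposed tools, not routine verifications --- and they are precisely where the paper's proof does its work. For $\BG^{int}$: theorem~\ref{Tform} is stated (and proved) only for symmetric rank-2 tensors constructed from the \emph{Riemann tensor and its covariant derivatives}; the metric variation of a mixed monomial carries explicit factors of $\BF$ and $\nabla^{(l)}\BF$, so it lies outside the theorem's scope, and your claimed form $G^{int}_{ab}=\sum_n a_n\Box^n S_{ab}$ is unjustified --- extending theorem~\ref{Tform} to such mixed tensors is essentially the missing proof itself. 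The paper instead disposes of these terms by three separate arguments: boost-weight counting for terms containing $\nabla^{(k)}\BR\ast\nabla^{(l)}\BF$ with $l>0$ (balanced times 1-balanced has boost order $\le-3$, lower than any rank-2 tensor admits); lemma~\ref{rank2FC} for terms with an undifferentiated $\BF$ next to curvature (there the boost order is only $\le-2$, so counting cannot exclude an $\ell_a\ell_b$ component and Bianchi identities are needed); and lemma~\ref{rank2F} for terms in which the variation has integrated away a curvature factor entering $K$ linearly, leaving pure-electromagnetic contributions of type $\nabla^{(k+2)}\ast\nabla^{(l)}\BF$ with no curvature factor at all.

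The same elimination phenomenon breaks your treatment of $\textnormal{div}\,\BH^{int}$. Your premise that every term of the $\BA$-variation ``carries at least one curvature factor alongside the electromagnetic factors'' is false whenever $K$ is \emph{linear} in $\BF$ (e.g. $K\propto C^{abcd}\nabla_aF_{bcd\ldots}$): the variation then strips out the only $\BF$ factor, producing a $p$-form of type $\nabla^{(l+1)}\ast\nabla^{(k)}\BR$ built purely from curvature. No boost-weight count can kill such a term --- balanced curvature tensors generically have non-vanishing b.w.\ $-1$ components, and a $p$-form can carry b.w.\ $-1$ components --- so the a fortiori appeal to theorem 2.5 of \cite{HerOrtPra18} collapses exactly here. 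The paper closes this hole with lemma~\ref{formlemma} (no non-vanishing $p$-form can be built from $\BR$ and its covariant derivatives in these geometries), whose proof requires the Bianchi identities \eqref{bianchi}, \eqref{3bianchi} and commutator manipulations rather than balance counting. In short: your scaffolding is right, but the two lemmas needed for the variation-generated pure-$\BF$ rank-2 tensors (lemma~\ref{rank2F}) and pure-curvature $p$-forms (lemma~\ref{formlemma}) are the actual content of the theorem, and neither follows from the results you invoke.
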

\begin{proof}
It is sufficient to show that the tensors $\BG^{int}$,  $\textnormal{div} \BH^{int}$ arising from $\mathcal{L}_{int}$ vanish -- since the vanishing of the other terms in \eqref{EM1}, \eqref{EM2} clearly follows by the same arguments as in the proof of $(ii)\Rightarrow (i)$ in theorem~\ref{minuniversal}. Again, we can without loss of generality restrict ourselves to polynomial higher-order corrections. 
Clearly, when both $\BG^{int}$ and $\BH^{int}$ consist of monomials containing $\nabla^{(k)}\BR \ast \nabla^{(l)}\BF$ with $k\ge0,l>0$, then a trivial boost weight (b.w.) counting shows that they have to vanish (recall lemmas~\ref{lemma_1deriv} and \ref{lemma_VSI} and the fact that $\nabla\BF$ is 1-balanced). This argument does not apply to $\BG^{int}$ in the case $l=0$ -- which however is covered by lemma \ref{rank2FC}.

However, different forms of $\BG^{int}$ and $\BH^{int}$ are also possible. Namely, if $\mathcal{L}_{int} \propto \nabla^{(k)}\BR \ast \nabla^{(l)}\BF$,  variations with respect to $\Bg$ and $\BA$ may yield 
terms of type $\nabla^{(k+2)}\ast \nabla^{(l)}\BF$ and $\nabla^{(l+1)} \ast \nabla^{(k)} \BR$, respectively. Fortunately, even these two types of terms are safe -- the first one is zero by lemma~\ref{rank2F} and the second one vanishes thanks to lemma \ref{formlemma}.
Hence, we conclude that also interaction terms necessarily vanish for $(\Bg,\BF)$. 
\end{proof}

\section{Explicit form of the solutions and discussion}\label{typeIIIsolutions}

The local form of general $VSI$ fields $(\Bg,\BF)$ solving the Einstein-Maxwell equations is known in standard Kundt coordinates (see \cite{Coleyetal06} and \cite{OrtPra16}\footnote{There is a typo in (6,\cite{OrtPra16}): the factorial $p!$ should be simply $p$.}). For solutions with vanishing higher-order corrections (theorem~\ref{minuniversal}) the multiple null direction (of both $\BR$ and $\BF$) must be recurrent (remark~\ref{rem_ii}), which gives (in the metric~\eqref{g_null})
\begin{equation}\label{ppframeIII}
\Bell = \de u, \qquad \Bn = \de r + \left[H^{(1)}(u,x)r + H^{(0)}(u,x)\right] \de u + W_k (u,x) \de x^k, \qquad \Bm_{(i)} = \de x^i, 
\end{equation}
\begin{equation}\label{localformIII}
\boldsymbol{F} = \frac{1}{(p-1)!} f_{i \dots j}(u) \de u \wedge \de x^{i} \wedge \dots \wedge \de x^{j}  \qquad (i,j,k,\ldots=2,\ldots,D-1) ,
\end{equation}
where we have also used the condition $\nabla_c F_{ad \dots e} \nabla^c F\indices{_b^{d \dots e}} = 0$ in $(ii)$ of theorem~\ref{minuniversal} to constraint the form of $\BF$ (cf. remarks~\ref{rem_DFDF_2}, \ref{rem_DFDF_vsi}). The functions $H^{(0)}$, $H^{(1)}$, $W_{i}$ and $f_{i \dots j}$ are then subject to the following equations 
\begin{equation}\label{Weylcond}
W_{[i,j]k}W^{[i,j]k} = 
2 W\indices{_{[k,m]}^{m}}W\indices{^{[k,n]}_{n}},
\end{equation}
\begin{equation}\label{-1ricci}
H\indices{^{(1)}_{,j}} = W\indices{_{[j,k]}^{k}},
\end{equation}
\begin{equation}\label{-2ricci}
 \Delta H^{(0)} =  2 H\indices{^{(1)}_{,k}}W^{k} + H^{(1)}W\indices{^{k}_{,k}}  + W_{[m, n]} W^{[m , n]} + {W_{m,u}} \indices{^{m}}  
- \kappa_0 \mathcal{F}^2,
\end{equation}
where $\Delta$ is the Laplace operator in the (flat) transverse space and $\mathcal{F}^2 \equiv f_{i \dots j} f^{i \dots j}$ was defined. 
Equation~\eqref{Weylcond} is equivalent to the condition $C_{acde}C\indices{_{b}^{cde}} = 0$ ($(ii)$ of theorem~\ref{minuniversal}), while
equations \eqref{-1ricci} and \eqref{-2ricci} correspond to the Einstein equations of negative boost weight (cf. \cite{Coleyetal06,OrtPra16}). The rest of projections of the Einstein-Maxwell equations is already satisfied \cite{Coleyetal06,OrtPra16}. For the sake of definiteness, an explicit example with $D\ge6$ and $p=3$  (building on an example given in \cite{Ortaggio18prd}) is given by \eqref{ppframeIII} with
\begin{equation}
\boldsymbol{F} = \de u \wedge(f_{23}\de x_2\wedge\de x_3+f_{45}\de x_4\wedge\de x_5) ,
 \label{F_ex6D}
\end{equation}
\beqn
 & & W_2=ax_3^2 , \qquad H^{(1)}=ax_2 ,  \\
 & & H^{(0)}=\frac{a^2}{3}x_3^4+bx_3^2+cx_4^2 , \qquad b+c=-\kappa_0(f_{23}^2+f_{45}^2) , \label{H0_ex6D}
\eeqn
where $f_{23}$, $f_{45}$, $a$, $b$ and $c$ are arbitrary functions of $u$ and the remaining $W_{i}$ ($i>2$) are understood to be zero.

The above spacetimes are generically of Weyl type III \cite{Coleyetal06}. The {\em Weyl type N} subclass of solutions takes the form \eqref{ppframeIII}, \eqref{localformIII} with the constraints (after using some coordinate freedom) \cite{Schimming74}\footnote{The first of \eqref{WC3pp} was obtained in \cite{Coleyetal06} and means that these solutions belong to the class of $pp$-waves (i.e., $\bl$ is covariantly constant). Then, for \pp waves, the ``ebenfrontiger Symmetrie'' condition~(2.1) of \cite{Schimming74} is equivalent to imposing the Riemann type N, which allows one to use theorem 2.1 of \cite{Schimming74} to arrive at the second of \eqref{WC3pp} (cf. also \cite{Horowitz90}).}
\begin{equation}\label{WC3pp}
 H^{(1)} = 0 , \qquad  W_{i}= 0 .
\end{equation}
Eqs.~\eqref{Weylcond} and \eqref{-1ricci} are thus automatically satisfied, while \eqref{-2ricci} reduces to 
\begin{equation}\label{EFE4pp}
 \Delta H^{(0)} =- \kappa_0  \mathcal{F}^2 , 
\end{equation}
where the RHS (the ``source'' term) depends only on $u$. An example is given by \eqref{F_ex6D}--\eqref{H0_ex6D} with $a=0$.

Finally, {\em conformally flat} solutions (i.e., Weyl type O) can be cast in the form
\be
 W_i=0 , \qquad  H^{(1)} = 0 , \qquad H^{(0)} =-\frac{\kappa_0\mathcal{F}^2}{2(D-2)}\sum_i(x^i)^2 ,
\ee
where a permitted term linear in (or independent of) the $x^i$ in $H^{(0)}$ has been removed by a transformation of the form $x^i\mapsto x^i+h_i(u)$, $r\mapsto r-\dot h_ix^i+g(u)$ (cf. section~24.5 of \cite{Stephanibook}).

The solution \eqref{ppframeIII}, \eqref{localformIII} can be understood as a gravitational and electromagnetic plane-fronted wave propagating in a flat spacetime (recovered for $H^{(1)}=H^{(0)}=W_i=0$). Since $f_{i \dots j}$ in \eqref{localformIII} depends only on $u$, every admissible electromagnetic field $\BF$ is constant over its wave surfaces and hence gives rise to a pure radiation with (transversely) homogeneous energy density $\kappa_0 \mathcal{F}^2/8\pi$. Solutions of type N and O belong to the class of \pp waves \cite{Brinkmann25}, already discussed in a similar context (for particular values of $D$ and $p$) in \cite{Guven87,HorSte90,Horowitz90}.

\begin{remark}[Relation to universal spacetimes and electromagnetic fields]
\label{interpretation}
According to theorem \ref{minuniversal}, Einstein-Maxwell solutions with vanishing higher-order corrections are defined by $VSI$ fields $(\Bg,\BF)$ that satisfy 
\begin{equation}\label{conditions}
\tau_i = 0, \qquad C\indices{_{acde}}C\indices{_{b}^{cde}}=0 .
\end{equation} 
This ensures that, in the limit of a test electromagnetic field (i.e., a ``small'' $\BF$ with negligible backreaction), the solution \eqref{ppframeIII}, \eqref{localformIII} gives rise to a universal electromagnetic field (theorem~1.5 of \cite{HerOrtPra18}) propagating in a Ricci-flat universal spacetime (theorem~1.4 of \cite{HerPraPra14}). However, let us emphasize that the {\em vacuum} solutions of \cite{HerPraPra14} (and \cite{Herviketal15,HerPraPra17}) are more general than the backgrounds allowed by our theorem~\ref{minuniversal}. One reason for this is that we required all higher-order curvature corrections to the Einstein tensor to vanish (and not just be proportional to the metric, as in \cite{HerPraPra14,Herviketal15,HerPraPra17} -- cf. \cite{Coleyetal08} for related comments), which implied that $\Bg$ (as well as $\BF$) is $VSI$. The second reason is that we needed to ensure that also corrections constructed out of $\BF$ vanish, which led to the first of \eqref{conditions} (cf. again the proof of $(i)\Rightarrow (ii)$ in theorem~\ref{minuniversal} for more details). Similarly, also the {\em test} electromagnetic fields on a fixed background obtained in \cite{OrtPra18,HerOrtPra18} are more general than those allowed by our theorem~\ref{minuniversal}, and examples are known in which $\Bg$ and/or $\BF$ are not VSI \cite{OrtPra18,HerOrtPra18}.

In addition, it is worth observing that the metric $\Bg$ defined in \eqref{ppframeIII} can be related to (a subset of the) Ricci-flat universal spacetimes of \cite{HerPraPra14} also by a {\em generalized Kerr-Schild transformation} with a suitable function $\mathcal{H}(u,x)$, under which both \eqref{conditions} are automatically preserved -- in the Kundt coordinates~\eqref{ppframeIII}, this amounts to a change $H^{(0)} \mapsto H^{(0)} + \mathcal{H}$ with $\Delta\mathcal{H}=\kappa_0 \mathcal{F}^2$ ($H^{(0)}$ does not appear in $\ell_{a;b}$ nor in Riemann components of b.w. $0,-1$, which explains why \eqref{conditions} are preserved).

\end{remark}

\begin{remark}[Kerr-Schild form]

Note that \textit{Weyl type N solutions are Kerr-Schild metrics with $\Bell$ (of \eqref{ppframeIII}) being the Kerr-Schild vector, while genuine type III solutions are not} (not even if the Kerr-Schild vector is allowed to be a {\em geodetic} null vector different from $\bl$). The type N part of this statement is manifest using \eqref{WC3pp}. The type III part follows from section~4.2.1 of \cite{OrtPraPra09} (which implies that a spacetime with a Kerr-Schild, Kundt vector is necessarily of Weyl type N, provided the Ricci tensor is N (aligned) or zero) and from Proposition~2 of \cite{OrtPraPra09} (which implies that a spacetime of Weyl type III cannot posses a geodesic Kerr-Schild vector distinct from the (unique) mWAND).

\end{remark}

\begin{remark}[$D=4$ solutions]
As noticed in Remark~\ref{rem_ii}, when $D=4$ the condition $C_{acde}C\indices{_{b}^{cde}} = 0$ can be dropped from theorems~\ref{minuniversal} and \ref {nonminuniversal}, and there are no additional constraints on the spacetime apart from being VSI (and thus Kundt) with a recurrent PND (and satisfying Einstein's equations). Thanks to known results \cite{Stephanibook}, all solutions admitted by theorem~\ref{minuniversal} can thus be reduced to the compact form
\beqn
 & & \d s^2=2\d\zeta\d\bar\zeta-2\d u\left(\d r+W\d\zeta+\bar W\d\bar\zeta+H\d u\right) , \nonumber \\
 & & \bF=\d u\wedge\large[f(u)\d\zeta+\bar f(u)\d\bar\zeta\large] ,
	\label{4D_recurr}
\eeqn 
where
\beqn
 & & W=W(u,\bar\zeta) , \qquad H=\frac{1}{2}(W_{,\bar\zeta}+\bar W_{,\zeta})r+H^{(0)}(u,\zeta,\bar\zeta) , \\
 & & H^{(0)}_{,\zeta\bar\zeta}-\frac{1}{2}\left(W_{,\bar\zeta}^2+\bar W_{,\zeta}^2+WW_{,\bar\zeta\bar\zeta}+\bar W\bar W_{,\zeta\zeta}+W_{,\bar\zeta u}+\bar W_{,\zeta u}\right)=\kappa_0 f\bar f . \label{4D_recurr_Einst}
\eeqn
These spacetimes are in general of Petrov type III. They are of type N iff $W_{,\bar\zeta\bar\zeta}=0$, in which case $W$ can be gauged away \cite{Stephanibook} and one is left with the standard form of electrovac \pp waves $\d s^2=2\d\zeta\d\bar\zeta-2\d u\d r-2H^{(0)}\d u^2$, with $H^{(0)}=\kappa_0f(u)\bar f(u)\zeta\bar\zeta+h(u,\zeta)+\bar h(u,\bar\zeta)$. These solutions were mentioned in a related context in \cite{Coley02}.

Above we discussed the standard case $p=2$. When $p=1$ (or $p=3$ up to duality), the only difference is that the electromagnetic field is given by $\bF=f(u)\d u$, where $f$ is now real, and the RHS of \eqref{4D_recurr_Einst} should be replaced by $\frac{1}{2}\kappa_0 f^2$.
\end{remark}

\begin{remark}[$D=4$ example of Petrov type III]
 In the special case of Einstein gravity coupled to generalized higher-derivative electrodynamics (i.e., $\mathcal{L}_{GC}=0=\mathcal{L}_{int}$), the fact that (ii) implies (i) was already pointed out in \cite{OrtPra16} (but without presenting a proof of this statement). Thanks to theorem~\ref{minuniversal}, a simple four dimensional example of Petrov type III constructed there is also free of corrections in the more general theory~\eqref{lagrangian}. This reads
\beqn
 & & \de s^2 =2\de u\left[\de r+\frac{1}{2}\left(xr-xe^x-2\kappa_{0} e^xc^2(u)\right)\de u\right]+ e^x(\de x^2+e^{2u}\de y^2) , \label{Petrov_EM} \\
 & & \bF=e^{x/2}c(u)\de u\wedge\left(-\cos\frac{ye^u}{2}\de x+e^u\sin\frac{ye^u}{2}\de y\right) .
\eeqn	
It is contained in the more general family \eqref{4D_recurr}, although here it is expressed in slightly different coordinates.
\end{remark}

\section*{Acknowledgments}

We thank Sigbj\o rn Hervik for useful comments. M.O. has been supported by research plan RVO: 67985840 and by the Albert Einstein Center for Gravitation and Astrophysics, Czech Science Foundation GA{\v C}R 14-37086G.

\appendix

\section{Variations of $\mathcal{L}$ evaluated on $VSI$ fields}\label{varapp}
\label{app_variations}
Varying the action \eqref{action} with $\mathcal{L}(I_i,J_j,K_k) \equiv \mathcal{L}_{grav}(I_i) +\mathcal{L}_{elmag}(J_j)+ \mathcal{L}_{int}(K_k)$, one has
\begin{equation}\label{varS}
\delta S = \int \de^{D} x \sqrt{-g}\left( 
-\frac{\mathcal{L}}{2}g_{ab}\delta g^{ab} + \sum_i \frac{\partial  \mathcal{L}_{grav}}{\partial I_i} \delta I_i + \sum_j \frac{\partial \mathcal{L}_{elmag}}{\partial J_j} \delta J_j + \sum_k \frac{\partial  \mathcal{L}_{int}}{\partial K_k} \delta K_k
   \right).
\end{equation}
Let us take a closer look at variation of the individual invariants. Taking e.g. the $n$-th term of the first sum and assuming the boundary terms vanish, integration by parts yields
\begin{equation}
\int \de^{D} x \sqrt{-g} \frac{\partial \mathcal{L}_{grav}}{\partial I_{n}} \delta I_{n} = 
\int \de^{D} x \sqrt{-g} \frac{\partial \mathcal{L}_{grav}}{\partial I_n} \frac{\delta I_n}{\delta g^{ab}} \delta g^{ab} + \left\{\textnormal{terms involving $\nabla^{(k)} \frac{\partial \mathcal{L}_{grav}}{\partial I_n}$}\right\}.
 \label{dL_exp}
\end{equation}
For a $CSI$ metric $\Bg$, the derivatives $\partial\mathcal{L}_{grav}/\partial{I_n}$  are just some constants. Hence, when evaluated on a $CSI$ metric $\Bg$, the bracketed term in \eqref{dL_exp} does not contribute to the resulting variation. A similar argument holds also for the rest of the terms in \eqref{varS}.

Hence, we conclude that,
when evaluated on $CSI$ fields $(\Bg,\BF)$, variations of $\mathcal{L}_{GC}, \mathcal{L}_{EC}$ (recall \eqref{exp_grav} and \eqref{exp_Maxw}\footnote{To avoid possible confusion, let us emphasize that the argument does not really need to assume that \eqref{exp_grav} and \eqref{exp_Maxw} come from a Taylor expansions -- one could alternatively simply {\em define} $\mathcal{L}_{GC}\equiv\mathcal{L}_{grav}-\mathcal{L}_{EH}$ and $\mathcal{L}_{EC}\equiv\mathcal{L}_{elmag} - \mathcal{L}_{M}$ (under the assumption that the Taylor expansions of $\mathcal{L}_{GC}$ and $\mathcal{L}_{EC}$ consist only of terms of higher order, but with no need to take such an expansion).}) and $\mathcal{L}_{int}$ reduce to a linear combination of variations w.r.t. $\delta g^{ab}$ or $\delta F^{ab\dots c}$ of the individual polynomial invariants. 
If, moreover, $(\Bg,\BF)$ are $VSI$, their polynomials invariants $I_k$, $J_k$, $K_i$ 
vanish and so do $\mathcal{L}_{GC}, \mathcal{L}_{EC}$ and $\mathcal{L}_{int}$. Hence, we arrive at the following expressions (evaluated on VSI fields)
\begin{equation}\label{var1}
G_{ab}^{GC}[\Bg] = {16 \pi} \sum_i \frac{\partial \mathcal{L}_{GC}}{\partial I_{i}} (0) \frac{\delta I_i}{\delta g^{ab}} [\Bg], 
\end{equation}
\begin{equation}\label{var2}
T_{ab}^{EC}[\BF] = {-2} \sum_j \frac{\partial \mathcal{L}_{EC}}{\partial J_{j}} (0) \frac{\delta J_j}{\delta g^{ab}} [\BF],
\end{equation}
\begin{equation}\label{variations2}
G_{ab}^{int}[\Bg,\BF] = {16 \pi}  \sum_k \frac{\partial \mathcal{L}_{int}}{\partial K_{k}} (0) \frac{\delta K_k}{\delta g^{ab}} [\Bg,\BF],
\end{equation}
\begin{equation}\label{variations3}
\nabla^a H_{ab \dots c}^{EC}[\BF] =
  {- \frac{8 \pi p}{\kappa_0}} \sum_j \frac{\partial \mathcal{L}_{EC}}{\partial J_{j}} (0) 
\nabla^a \frac{\delta J_j}{\delta F^{ab\dots c}} [\BF] ,
\end{equation}
\begin{equation}\label{var5}
\nabla^a H_{ab \dots c}^{int}[\BF] =
    {- \frac{8 \pi p}{\kappa_0}}  \sum_k \frac{\partial \mathcal{L}_{int}}{\partial K_{k}} (0) 
\nabla^a \frac{\delta K_k}{\delta F^{ab\dots c}} [\BF].
\end{equation}
The above results are used in the proof of theorem~\ref{minuniversal} (see also remark~\ref{CSIvariations}). Similar conclusions (for the metric variations) were obtained in section~4 of \cite{Buchdahl83}. 

For other applications, it may also be useful to note that, since the Lagrangian corrections are of higher-order ($>2$), then necessarily each individual term in the sums in \eqref{var1}--\eqref{var5} is of (the same) higher order. 
Therefore, if one of the scalar invariants $\{I_i,J_j,K_k\}$ is of order 2 (such as $I=R$, $J=F_{a \dots b}F^{a \dots b}$, \dots), then necessarily the partial derivative of $\mathcal{L}_{GC}, \mathcal{L}_{EC}$ and $\mathcal{L}_{int}$ with respect to that invariant vanishes at zero and hence the corresponding term does not contribute to the variation, when evaluated on $VSI$ field. Thus, for example, Einstein's equations for VSI spacetimes are unaffected by higher-order corrections of the form $R^2$ or $R R\indices{_{abcd}} R\indices{^{abcd}}$, but may contain corrections coming, e.g., from $R_{ab}R^{ab}$ (cf. \cite{Buchdahl83} in the special case of 4D \pp waves).

\section{Curvature/electromagnetic rank-2 tensors and $p$-forms}

\subsection{Preliminaries and previous results}

\label{subsec_prelim}

Let us start with some preliminary comments. For the definition of degenerate Kundt spacetimes (needed in the following) we refer the reader to \cite{ColHerPel09a,Coleyetal09} (see also appendix~A of \cite{OrtPra16}), while the definition of balanced and 1-balanced tensors can be found in \cite{Pravdaetal02,Coleyetal04vsi} and \cite{HerPraPra14}, respectively. The GHP (Geroch-Held-Penrose) notation in arbitrary dimension is defined in \cite{Durkeeetal10}.

A null $p$-form is defined by \eqref{nullF}. The traceless part of the Ricci tensor is given by
\be
 S_{ab}\equiv R_{ab}-\frac{R}{D}g_{ab} .
 \label{S}
\ee
In the following, we will mostly consider spacetimes with constant Ricci scalar. It is thus useful to recall
\begin{lemma}[Bianchi identity when $R=$const \cite{GurSisTek14}]
 In a $D$-dimensional spacetime ($D\ge3$) with $R=$const., the following identities hold 
\end{lemma}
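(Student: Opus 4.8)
The plan is to read the asserted identities as the once- and twice-contracted second Bianchi identities specialized to $R=\text{const}$, i.e.\ $\nabla^a S_{ab}=0$ together with a formula expressing the divergence $\nabla^a C_{abcd}$ of the Weyl tensor through first derivatives of $S_{ab}$. Everything follows algebraically from $\nabla_{[e}R_{ab]cd}=0$; the hypothesis $R=\text{const}$ enters only to annihilate a handful of trace terms. First I would record the once-contracted Bianchi identity $\nabla^a R_{abcd}=\nabla_c R_{bd}-\nabla_d R_{bc}$, obtained by tracing the differential Bianchi identity over one index of the first antisymmetric pair against the differentiation index. Contracting this once more (in $b$ and $d$) gives the twice-contracted identity $\nabla^a R_{ab}=\tfrac12\nabla_b R$. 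Imposing $R=\text{const}$ kills the right-hand side, so $\nabla^a R_{ab}=0$; since $S_{ab}=R_{ab}-\tfrac{R}{D}g_{ab}$ and $\nabla R=0$, the metric term is covariantly constant and drops under the divergence, yielding the first identity $\nabla^a S_{ab}=0$.

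For the Weyl divergence I would substitute the standard decomposition $R_{abcd}=C_{abcd}+\tfrac{1}{D-2}\big(g_{ac}R_{bd}-g_{ad}R_{bc}-g_{bc}R_{ad}+g_{bd}R_{ac}\big)-\tfrac{R}{(D-1)(D-2)}\big(g_{ac}g_{bd}-g_{ad}g_{bc}\big)$ into the once-contracted identity. Applying $\nabla^a$ and using the two facts already in hand, $\nabla^a R_{ad}=0$ and $\nabla R=0$, every trace contribution collapses except the antisymmetric piece $\tfrac{1}{D-2}(\nabla_c R_{bd}-\nabla_d R_{bc})$, so that $\nabla^a R_{abcd}=\nabla^a C_{abcd}+\tfrac{1}{D-2}(\nabla_c R_{bd}-\nabla_d R_{bc})$. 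Equating the left-hand side with $\nabla_c R_{bd}-\nabla_d R_{bc}$ (the once-contracted identity) and solving for the Weyl divergence produces $\nabla^a C_{abcd}=\tfrac{D-3}{D-2}(\nabla_c R_{bd}-\nabla_d R_{bc})$; since $R$ is constant I may freely trade $R_{bd}$ for $S_{bd}$, giving the stated $\nabla^a C_{abcd}=\tfrac{D-3}{D-2}(\nabla_c S_{bd}-\nabla_d S_{bc})$.

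I do not expect any genuine obstacle here, as the result is a purely tensor-algebraic consequence of the Bianchi identities; the only real care is the coefficient bookkeeping in the Weyl substitution---one must verify that the two divergence terms $g_{bc}\nabla^a R_{ad}$, $g_{bd}\nabla^a R_{ac}$ and the $\nabla R$ terms vanish under $R=\text{const}$ (they do, by the twice-contracted identity just derived), leaving the surviving antisymmetric trace with exactly the factor $1-\tfrac{1}{D-2}=\tfrac{D-3}{D-2}$. To guard against a sign or index-placement slip I would sanity-check the prefactor against the limiting dimensions: it gives $\tfrac12$ in $D=4$ and vanishes in $D=3$, consistently with $C_{abcd}\equiv0$ there.
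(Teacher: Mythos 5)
Your proposal is correct and is essentially the paper's own proof spelled out in full: the paper's proof is the one-line instruction ``just use the contracted Bianchi identity, the definition of the Weyl tensor and the definition of $S_{ab}$,'' which is precisely the computation you perform. Note only that the paper's first stated identity is the once-contracted Bianchi identity with $R_{ab}$ traded for $S_{ab}$ (namely $\nabla^b R_{abcd} = \nabla_d S_{ac} - \nabla_c S_{ad}$) rather than $\nabla^a S_{ab}=0$, but you derive exactly this en route, so nothing is missing.
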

\begin{equation}\label{1bianchi}
\nabla^b R_{abcd} = \nabla_d S_{ac} - \nabla_c S_{ad},
\end{equation}
\begin{equation}\label{3bianchi}
\nabla^b C_{abcd} = \frac{D-3}{D-2} ( \nabla_d S_{ac} - \nabla_c S_{ad} ). 
\end{equation}
\begin{proof}
 Just use the contracted Bianchi identity, the definition of the Weyl tensor and \eqref{S}. (For $D=2$ this lemma would be trivial since all the involved quantities vanish identically.)
\end{proof}

Furthermore, we will restrict ourselves to Kundt spacetimes. A Kundt spacetime with constant $R$ is necessarily {\em degenerate} Kundt (cf. Proposition A.2
of \cite{OrtPra16}), for which we have the useful result
\begin{lemma}[Derivatives of 1-balanced tensors in degenerate Kundt spacetimes \cite{HerOrtPra18}]
\label{lemma_1deriv}
 In a degenerate Kundt spacetime, the covariant derivative of a 1-balanced tensor is a balanced 1-tensor.
\end{lemma}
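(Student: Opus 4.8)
The plan is to work entirely in the GHP formalism of \cite{Durkeeetal10}, decomposing every tensor into its boost-weight (b.w.) homogeneous frame components (GHP scalars) and tracking the action of the three frame derivatives $\tho$ (b.w.\ $+1$), $\dho_i$ (b.w.\ $0$) and $\tho'$ (b.w.\ $-1$). Recall that the notions of balanced and 1-balanced tensor (see \cite{Coleyetal04vsi,HerPraPra14,HerOrtPra18}) are phrased precisely in these terms: a tensor is balanced (resp.\ 1-balanced) when each of its b.w.\ components is annihilated by a prescribed power of $\tho$ (components of too high a b.w.\ being required to vanish), the 1-balanced conditions being by one power of $\tho$ stronger than the balanced ones. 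Thus it suffices to show that, for each homogeneous component of $\nabla T$, the relevant power of $\tho$ annihilates it, given the (one degree stronger) hypotheses on the components of $T$.

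First I would write $\nabla T$ in the null frame: each frame component of $\nabla T$ equals a frame derivative ($\tho$, $\dho_i$ or $\tho'$) of a component of $T$, plus a finite sum of Ricci-rotation-coefficient terms multiplying components of $T$. Here the single most important input is the Kundt property: since $\bl$ is geodesic and free of expansion, shear and twist, the positive-b.w.\ rotation coefficients vanish ($\kappa_i=0$ of b.w.\ $+2$ and $\rho_{ij}=0$ of b.w.\ $+1$), so every surviving coefficient ($\tau_i$, $\tau'_i$ and the spatial connection of b.w.\ $0$, $\rho'_{ij}$ of b.w.\ $-1$, $\kappa'_i$ of b.w.\ $-2$) has b.w.\ $\le 0$. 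Consequently no connection term can raise the b.w., and since $\tho$ is the only b.w.-raising frame derivative, a purely kinematic count already bounds the boost order of $\nabla T$ by that of $T$ plus one, placing $\nabla T$ in the b.w.\ range allowed for a balanced tensor. The substantive part is therefore not the boost-order bound but the $\tho$-annihilation conditions.

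To establish those I would process each component of $\nabla T$ by applying the required power of $\tho$ and commuting all thorns to the right, onto the component of $T$, using the GHP commutators $[\tho,\dho_i]$ and $[\tho,\tho']$. In a degenerate Kundt spacetime (the relevant setting, cf.\ Proposition~A.2 of \cite{OrtPra16}) these commutators produce only curvature and rotation-coefficient terms of b.w.\ $\le 0$, which is exactly the degeneracy property that the Riemann tensor and all its covariant derivatives are of boost order $\le 0$ aligned with $\bl$; so commuting a thorn past $\dho_i$ or $\tho'$ costs at most one unit of b.w.\ and never introduces a positive-b.w.\ factor. Once all thorns have been moved onto a component of $T$, one invokes the 1-balanced hypothesis: the number of thorns now acting on that component meets the stronger 1-balanced bound, hence the term vanishes. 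The bookkeeping is arranged so that the single extra power of $\tho$ supplied by 1-balancedness is precisely the one compensating the unit of b.w.\ raised by the derivative (the $+1$ of the $\bl$-aligned derivative index) or the single thorn spent in crossing a commutator; this is also why the hypothesis must be 1-balanced rather than merely balanced.

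The main obstacle will be controlling the commutator (curvature-coupling) terms: unlike the naive leading pieces $\tho^{k}\dho_i(\cdot)$ and $\tho^{k}\tho'(\cdot)$, the terms generated by $[\tho^{k},\dho_i]$ and $[\tho^{k},\tho']$ distribute the $k$ thorns between a curvature/rotation factor and the component of $T$, so that fewer thorns may reach $T$ than the 1-balanced bound requires. Handling these requires simultaneously using that the curvature factors are themselves of non-positive b.w.\ and, via the Bianchi identities \eqref{1bianchi}--\eqref{3bianchi} together with the degenerate Kundt structure, carry their own thorn-suppression, so that whatever thorns land on the curvature factor, combined with the b.w.\ deficit it carries, still force the whole term to the required order. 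I expect this interplay, balancing thorns lost to commutators against the b.w.\ suppression carried by the curvature factors, to be the technical heart of the argument, most cleanly organized as a double induction on the power of $\tho$ and on the boost order of the component of $T$ being differentiated.
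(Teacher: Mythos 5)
First, a point of comparison: the paper itself does not prove this lemma at all --- it is imported verbatim from \cite{HerOrtPra18} as a known result (it sits under the heading ``Preliminaries and previous results''), so your proposal can only be measured against the proof in that reference, which indeed proceeds by the same kind of frame-component bookkeeping you describe: boost-weight decomposition, vanishing of the positive-b.w.\ rotation coefficients by the Kundt property, and control of the thorn commutators via the degeneracy of the curvature. At the level of strategy, your outline is the right one.

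There is, however, a genuine gap in \emph{what} you prove rather than in how. You read the conclusion as ``$\nabla T$ is balanced'' and arrange all the bookkeeping (including the boost-order count ``that of $T$ plus one'') around that weaker statement. The awkward phrase ``balanced 1-tensor'' notwithstanding, the conclusion the paper actually relies on is that $\nabla T$ is again \emph{1-balanced}: lemma~\ref{Sbalanced} applies this lemma iteratively to conclude that $\nabla^{(k)}\BS$ is 1-balanced for \emph{all} $k\geq 0$ (impossible if the output is merely balanced, since after one differentiation the lemma's hypothesis is no longer available), and the proofs of theorem~\ref{minuniversal} and lemma~\ref{rank2F} need $\nabla^{(l)}\BF$ ($l>0$) to have boost order $\le -2$, so that monomials $\nabla^{(k)}\BR \ast \nabla^{(l)}\BF$ have boost order $\le -3$ and every rank-2 contraction of them vanishes; a product of two merely \emph{balanced} tensors can still have b.w.\ $-2$ components $\propto \ell_a\ell_b$, and that counting collapses. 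Your framework can deliver the stronger conclusion, but two things are missing: (i) the b.w.\ $-1$ part of $\nabla T$, which your kinematic count only places ``within the balanced range'', in fact vanishes identically, because in a Kundt frame it reduces to $\tho$ acting on the b.w.\ $-2$ components of $T$, and 1-balancedness kills exactly that one thorn; hence $\nabla T$ has boost order $\le -2$; (ii) all annihilation conditions must then be verified with one extra power of $\tho$ (i.e.\ $\tho^{-c-1}$, not $\tho^{-c}$, on the b.w.\ $c$ component). A secondary inaccuracy: the ``thorn-suppression'' of the coefficient factors cannot be obtained from \eqref{1bianchi}--\eqref{3bianchi}, which presuppose $R=\textnormal{const}$ (not assumed here); what is needed, and what the degenerate Kundt property supplies through the GHP Ricci and Bianchi equations, are the frame statements that $\tho$ annihilates $\tau_i$ and the b.w.\ $0$ curvature components, and that a rotation coefficient of b.w.\ $-k$ is annihilated by $\tho^{k+1}$. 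Since the commutator bookkeeping is precisely the part you defer as ``the technical heart'', these are not cosmetic omissions but the substance still to be supplied.
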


In particular, $VSI$ spacetimes coincide with the Kundt spacetimes of Riemann type III (or more special) \cite{Pravdaetal02,Coleyetal04vsi}, and are therefore a subset of the degenerate Kundt metrics. Recall that 
\begin{lemma}[$\nabla^{(k)}\BR$ in $VSI$ spacetimes \cite{Coleyetal04vsi}]
\label{lemma_VSI}
 In a $VSI$ spacetime, the covariant derivatives $\nabla^{(k)}\BR$ are balanced for any $k \geq 0$.
\end{lemma}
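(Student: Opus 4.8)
The plan is to argue by induction on the order $k$, working in the null frame \eqref{g_null} adapted to the VSI geometry and tracking boost weights throughout. As the starting point I would invoke the classification of VSI spacetimes (theorem~1 of \cite{Coleyetal04vsi}): the metric is a degenerate Kundt spacetime whose Riemann tensor has strictly negative boost weight (Weyl type III, traceless Ricci type N, $R=0$). I would then fix the Kundt frame so that $\Bell$ is geodesic and affinely parametrised. The decisive structural fact is that, in such a frame, $\Bell$ has vanishing expansion, shear and twist, so that $\nabla\Bell$ --- and, by metric compatibility together with the affine gauge, also $\nabla\Bn$ and $\nabla\Bm_{(i)}$ --- has boost weight $\le 0$. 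Hence no connection term occurring in a covariant derivative can raise the boost weight of a frame component, and the unique boost-weight-raising mechanism is the frame operator $\tho=\ell^a\nabla_a$, which shifts a component up by one unit.

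Writing $\nabla T=\sum_b\{\nabla T\}_b$ for any tensor $T$, the point above means that $\{\nabla T\}_b$ is assembled from $\tho\{T\}_{b-1}$, $\dho_i\{T\}_b$, $\tho'\{T\}_{b+1}$ and terms in which a spin coefficient of non-positive boost weight multiplies a component $\{T\}_{b'}$ with $b'\ge b$. I would recall that \emph{balanced} means, in essence, that $T$ has boost weight $\le -1$ and that each boost-weight component $\{T\}_b$ is annihilated by the appropriate power $\tho^{-b}$ --- exactly the condition ensuring that one further differentiation cannot generate a component of boost weight $\ge 0$. With this in hand, the base case $k=0$ reduces to two checks: that $\BR$ has boost weight $\le -1$ (immediate from type III / type N and $R=0$) and that the annihilation conditions $\tho^{-b}\{\BR\}_b=0$ hold, which is precisely the content of the VSI characterisation and follows from the (boost-weight-raised) Bianchi identities in the Kundt geometry.

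For the inductive step I would assume $\nabla^{(k)}\BR$ balanced and establish the same for $\nabla^{(k+1)}\BR=\nabla(\nabla^{(k)}\BR)$. The vanishing of the non-negative boost-weight part is the transparent point: the would-be top component $\{\nabla^{(k+1)}\BR\}_0$ collapses to $\tho\{\nabla^{(k)}\BR\}_{-1}$, which is zero because the $b=-1$ component of a balanced tensor is killed by a single $\tho$. The substantive part, and the step I expect to be the main obstacle, is verifying $\tho^{-b}\{\nabla^{(k+1)}\BR\}_b=0$ for $b\le-1$: after inserting the schematic form of $\{\nabla^{(k+1)}\BR\}_b$, each term must be brought into the shape $\tho^{-b'}$ acting on a component of $\nabla^{(k)}\BR$ to which the inductive hypothesis applies, and this requires commuting the powers of $\tho$ past $\dho_i$ and $\tho'$. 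The heart of the proof is therefore a careful control of the GHP commutators $[\tho,\dho_i]$ and $[\tho,\tho']$ in a degenerate Kundt spacetime, checking that the spin coefficients they produce (all of non-positive boost weight, with expansion, shear and twist absent) never consume more powers of $\tho$ than the balance conditions supply. This commutator bookkeeping is exactly the mechanism already isolated in lemma~\ref{lemma_1deriv}, so a clean way to organise the induction is to recast each step as an application of that lemma.
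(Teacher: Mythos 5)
The paper offers no proof of lemma~\ref{lemma_VSI} at all: it is imported from \cite{Coleyetal04vsi} (building on \cite{Pravdaetal02}), so your attempt can only be measured against the literature proof rather than against anything internal to this paper. That said, your overall route --- induction on $k$, boost-weight bookkeeping in a frame adapted to the Kundt direction, base case for $\BR$ supplied by the Bianchi identities, inductive step reduced to control of the commutators $[\tho,\dho_i]$ and $[\tho,\tho']$ --- is indeed the strategy of those references, and the skeleton you describe is sound.

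The genuine gap sits exactly at the step you yourself call the heart of the proof, and the device you propose to discharge it would fail. Lemma~\ref{lemma_1deriv} is a statement about \emph{1-balanced} tensors, which is a strictly stronger property than balancedness: for a component of boost weight $b$ one requires $\tho^{-b-1}\eta=0$, so in particular all boost-weight $-1$ components of a 1-balanced tensor must vanish and its boost order is at most $-2$ (cf.\ \cite{HerPraPra14}). The Riemann tensor of a generic VSI spacetime is of type III, with non-vanishing boost-weight $-1$ (Weyl and Ricci) components; it is therefore balanced but \emph{not} 1-balanced, so lemma~\ref{lemma_1deriv} cannot be invoked even at the first step of your induction, and its conclusion (preservation of 1-balancedness) is in any case not the statement you need (preservation of balancedness). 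What the induction actually requires is the balanced analogue --- in a Kundt spacetime whose Riemann tensor has negative boost order, the covariant derivative of a balanced tensor is again balanced --- and proving that statement (including the curvature terms generated by the GHP commutators, not only the spin-coefficient terms you mention) is precisely the nontrivial content of \cite{Pravdaetal02,Coleyetal04vsi}; it is not available anywhere in the present paper as a citable black box. Two lesser points: metric compatibility plus affine parametrization does \emph{not} make all connection components of non-positive boost weight --- the boost-weight $+1$ coefficient $\Mi_{j0}$ describing the rotation of the spatial frame along $\Bell$ survives, so you must additionally take the frame parallelly transported along $\Bell$ (which is also what the definition of balancedness in the cited works presupposes), or else phrase every step in strictly GHP-covariant terms; and VSI by itself gives traceless Ricci type III, not type N --- type N arises in this paper only once Einstein's equations with a null $\BF$ are imposed --- though this slip does not affect the structure of your argument.
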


In the rest of this appendix, we will only consider Kundt spacetimes of {\em traceless Ricci type N}, i.e., 
\be
 S_{ab}=\omega'\ell_a\ell_b \qquad (\ell_a \ell^a=0) ,
\ee
where $\omega'$ is a function.

\subsection{New results useful in the proof of theorems~\ref{minuniversal} and \ref {nonminuniversal}}

With a mild assumption on $R$ (i.e., not necessarily constant) one can prove
\begin{lemma}\label{Sbalanced}
Let $\Bg$ be a traceless Ricci type N Kundt metric with $\tho' R = 0$ in a frame adapted to $\bl$. Then $\nabla^{(k)}\BS$ is 1-balanced for
any $k \geq 0$.
\end{lemma}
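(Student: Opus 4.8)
The plan is to reduce the whole statement to one application of Lemma~\ref{lemma_1deriv} plus an elementary induction on $k$, after upgrading $\Bg$ to a \emph{degenerate} Kundt metric. The base case $k=0$ is immediate: since $\Bg$ is of traceless Ricci type N, $S_{ab}=\omega'\ell_a\ell_b$ has a single nonvanishing frame component $S_{ab}n^an^b=\omega'$, sitting at boost weight $-2$, so $\BS$ has boost order $-2$ and in particular a vanishing boost-weight $-1$ part; it therefore satisfies the definition of a 1-balanced tensor trivially. For the inductive step I would write $\nabla^{(k+1)}\BS=\nabla\big(\nabla^{(k)}\BS\big)$ and, assuming $\nabla^{(k)}\BS$ is 1-balanced, invoke Lemma~\ref{lemma_1deriv} (by which, in a degenerate Kundt spacetime, the covariant derivative of a 1-balanced tensor is again 1-balanced) to conclude that $\nabla^{(k+1)}\BS$ is 1-balanced. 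The only nontrivial point is thus to check the hypothesis of Lemma~\ref{lemma_1deriv}, namely that $\Bg$ is degenerate Kundt.

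This is exactly where the assumption $\tho'R=0$ enters. First I would compute the divergence of $\BS$ using the Kundt conditions: because $\bl$ is geodesic and affinely parametrised ($\ell^c\nabla_c\ell_a=0$) and non-expanding ($\nabla^a\ell_a=0$), one finds $\nabla^aS_{ab}=(\ell^c\nabla_c\omega')\ell_b=(\tho\omega')\ell_b$. Inserting this into the contracted Bianchi identity $\nabla^aS_{ab}=\tfrac{D-2}{2D}\nabla_bR$ gives
\begin{equation}
\nabla_b R = \frac{2D}{D-2}\,(\tho\omega')\,\ell_b ,
\end{equation}
so that $\nabla R$ is proportional to $\ell_b$. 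Projecting onto the frame yields $\tho R=\ell^b\nabla_bR=0$ and $\dho_iR=m_{(i)}^b\nabla_bR=0$ automatically, the only surviving component being $\tho'R=n^b\nabla_bR=\tfrac{2D}{D-2}\tho\omega'$. Hence the hypothesis $\tho'R=0$ is equivalent to $\tho\omega'=0$ and forces $\nabla R=0$, i.e. $R=\textnormal{const}$. Since a Kundt metric with constant $R$ is necessarily degenerate Kundt (Proposition~A.2 of \cite{OrtPra16}), Lemma~\ref{lemma_1deriv} applies and the induction closes.

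The hard part is the interplay between differentiation and boost weight in the inductive step: a priori $\nabla$ contains the boost-weight-raising operator $\tho$, so differentiating could increase the boost order and spoil 1-balancedness — indeed the putative boost-weight $-1$ part of $\nabla\BS$ is precisely $\tho\omega'$. What rescues the argument is that the mild hypothesis $\tho'R=0$ is exactly the condition $\tho\omega'=0$ killing this leading term and, more structurally, that it promotes $\Bg$ to a degenerate Kundt spacetime, in which Lemma~\ref{lemma_1deriv} (whose proof exploits the degenerate Kundt structure) guarantees that $\nabla$ preserves 1-balancedness at every order. I would also verify that $\tho'R$ is frame-independent in this setting — it is, since $\nabla R\propto\ell_b$ implies that any admissible $\bn$ gives the same value — but beyond this the proof is a short combination of the contracted Bianchi identity, the Kundt optical conditions, and Lemma~\ref{lemma_1deriv}.
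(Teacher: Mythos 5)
Your overall route is the same as the paper's: the contracted Bianchi identity converts the hypothesis $\tho' R = 0$ into $\tho\omega' = 0$ (your computation $\nabla^a S_{ab} = (\tho\omega')\ell_b = \tfrac{D-2}{2D}\nabla_b R$ for a Kundt, traceless-Ricci-type-N metric is correct), this forces $R$ to be constant, hence $\Bg$ is degenerate Kundt, and lemma~\ref{lemma_1deriv} then propagates 1-balancedness through all covariant derivatives. Spelling out the degenerate Kundt step explicitly is a genuine improvement on the paper's terse wording. However, there is a real error in your base case, and it matters.

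You claim that $\BS$ is 1-balanced ``trivially'' because its only nonvanishing component $\omega'$ sits at boost weight $-2$. That is not the definition of 1-balanced (see \cite{HerPraPra14}, the reference the paper cites for this notion): a scalar of boost weight $b<-1$ is 1-balanced only if in addition $D^{-b-1}\eta = 0$, so for $\omega'$ at b.w.\ $-2$ this requires $D\omega' = 0$ (i.e.\ $\tho\omega'=0$ in an affinely parametrized frame). Thus the base case of your induction is \emph{exactly} the nontrivial statement $\tho\omega' = 0$; it is precisely where the hypothesis $\tho' R = 0$ does its essential work, and this is how the paper's proof is organized: Bianchi gives $\tho\omega'=0$, hence $\BS$ is 1-balanced \emph{by definition}, and lemma~\ref{lemma_1deriv} then iterates. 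With your weaker reading of ``1-balanced'' (boost order $\le -2$ only), the inductive step would also break: that property alone is not preserved by $\nabla$ even in degenerate Kundt spacetimes, since the boost-weight $-1$ component of $\nabla\BS$ is $D\omega'$ --- a point you yourself make in your closing paragraph, in tension with your claim that the base case needs no hypothesis. The repair is immediate and uses only what you already derived: $\tho' R=0$ gives $\tho\omega'=0$ via Bianchi, which (i) makes $\BS$ 1-balanced in the actual sense of the definition, and (ii) gives $R=\mathrm{const}$, hence degenerate Kundt, so lemma~\ref{lemma_1deriv} applies and the induction closes. As written, though, the hypothesis is invoked only to secure the degenerate Kundt property, leaving the base case unjustified.
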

\begin{proof}
Under the assumptions, the contracted Bianchi identity implies $\tho \omega^\prime = 0$ (cf. the primed
version of (2.50, \cite{Durkeeetal10}), or (2.35, \cite{KucPra16})). Therefore, the definition of 1-balanced tensors, together
with lemma~\ref{lemma_1deriv}, implies that the traceless part of the Ricci tensor and its covariant
derivatives of arbitrary order are 1-balanced.
\end{proof}

\begin{lemma}\label{formlemma}
Let $\Bg$ be a Weyl type III, Ricci type N Kundt metric. There is no non-vanishing $p$-form constructed from $\BR$ and its covariant derivatives for $p\ge0$. 
\end{lemma}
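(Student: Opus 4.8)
The plan is to run a boost-weight (b.w.) argument in the frame \eqref{g_null} adapted to the Kundt direction $\bl$, exploiting that a $p$-form can only host boost weights in a very narrow band. First I would invoke the standing assumption of this subsection that $R$ is constant (so that $\Bg$ is degenerate Kundt) and split $\BR$ into its Weyl part $\BC$, its traceless-Ricci part $\BS$, and the Ricci-scalar part, which is proportional to $R$ times products of the metric. Under the hypotheses (Weyl type III, traceless Ricci type N) the first block has boost order $-1$ and the second boost order $-2$, while the third is built from the metric with a constant coefficient and is therefore inert under antisymmetrisation. Since the spacetime is degenerate Kundt, these two genuine blocks are balanced and their covariant derivatives $\nabla^{(k)}\BC$, $\nabla^{(k)}\BS$ remain of boost order $\le -1$ (cf.\ Lemmas~\ref{lemma_1deriv}, \ref{lemma_VSI}); moreover $\nabla^{(k)}\BS$ is $1$-balanced by Lemma~\ref{Sbalanced}, hence of boost order $\le -2$. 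Thus all non-metric building blocks of any curvature form have boost order $\le -1$.

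Next I would record the two elementary facts that make forms special. A $p$-form is totally antisymmetric, so each frame component carries at most one $\bl$-leg and at most one $\bn$-leg; hence its only possibly nonzero components have b.w.\ in $\{-1,0,+1\}$. On the other hand, the boost order of a monomial built from these blocks adds under tensor products and is never raised by contraction with the (b.w.\ $0$) metric, so a monomial containing two or more balanced factors has boost order $\le -2$, and a monomial containing at least one balanced factor has boost order $\le -1$. Consequently the b.w.\ $+1$ and b.w.\ $0$ parts of the form $\Omega$ can originate only from monomials with \emph{no} balanced factor, i.e.\ from powers of the constant $R$ times products of metrics; but antisymmetrising a product of symmetric metrics over $p\ge 1$ free indices gives zero. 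This already forces $\Omega$ to have boost order $\le -1$.

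The hard part will be to kill the remaining b.w.\ $-1$ part, which by the counting above can only come from terms that are \emph{linear} in a single block $\nabla^{(k)}\BR$ (dressed with inert metric factors and powers of $R$), self-contracted and then antisymmetrised down to rank $p$. Here I would run a case analysis on how one block can be reduced: contractions internal to $\BC$ vanish because the Weyl tensor is tracefree; the only contractions mixing a derivative index with a Weyl index reduce, via the once-contracted Bianchi identity~\eqref{3bianchi}, to covariant derivatives of $\BS$, which are $1$-balanced and thus of boost order $\le -2$, contributing nothing at b.w.\ $-1$; and any monomial that retains an uncontracted Weyl block must ultimately antisymmetrise Weyl indices, which dies by the pair symmetries and the first Bianchi identity. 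The aim is that every candidate b.w.\ $-1$ contribution is either annihilated algebraically or pushed to boost order $\le -2$, so $\overset{(-1)}{\Omega}=0$ and hence $\Omega=0$. I expect this bookkeeping of tracelessness, \eqref{3bianchi}, and the form symmetries to be the genuine obstacle, whereas the b.w.\ $+1$ and b.w.\ $0$ steps are routine; the case $p=0$ is handled separately and follows from the vanishing of all scalar curvature invariants of a Weyl type III, traceless Ricci type N metric that are not reducible to the constant $R$.
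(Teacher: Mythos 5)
Your overall strategy coincides with the paper's: boost-weight counting in a frame adapted to $\bl$ shows that a curvature $p$-form can have non-vanishing components only at b.w.\ $-1$, and only from terms \emph{linear} in $\nabla^{(k)}\BC$ (since $\nabla^{(k)}\BS$ is $1$-balanced by lemma~\ref{Sbalanced} and any product of two balanced blocks has boost order $\le -2$), and these surviving terms are then killed by tracelessness of $\BC$, the contracted Bianchi identity \eqref{3bianchi} and the algebraic Bianchi identity. Your endgame is organized a little differently and is arguably cleaner: you trichotomize according to the fate of the four Weyl indices (contracted among themselves, contracted with a derivative index, or all left free and hence antisymmetrized), which handles all $p$ uniformly and never invokes the uncontracted second Bianchi identity $R_{ab[cd;e]}=0$, whereas the paper splits into the cases $p\le 2$ and $p\ge 3$ and uses \eqref{bianchi} in the latter. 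For your trichotomy to close you do need the paper's observation that the derivatives inside $\nabla^{(k)}\BC$ effectively commute modulo b.w.\ $-2$ terms, so that a contracted derivative can be shuffled next to $\BC$ before applying \eqref{3bianchi}; this is implicit in your ``two balanced factors'' counting, but it should be stated.

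The one genuine imprecision is your reading of the hypotheses. ``Ricci type N'' in the lemma means the \emph{full} Ricci tensor is of type N, so $R=0$ (not merely constant) and $\Bg$ is VSI; this is in fact forced by the statement itself, since otherwise $R$ would be a non-vanishing $0$-form constructed from the curvature and the lemma would be false at $p=0$. Working instead with a ``standing assumption $R=\mathrm{const}$'' has two consequences: your $p=0$ case degenerates into the hedge about invariants ``not reducible to the constant $R$'', which is weaker than what is claimed, and your appeal to lemma~\ref{lemma_VSI} for balancedness of $\nabla^{(k)}\BC$ is not justified, since that lemma requires VSI. Both problems evaporate once you note $R=0$: the metric is then VSI, lemma~\ref{lemma_VSI} applies, and the $p=0$ case follows at once (as in the paper). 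With that correction your argument is complete.
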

\begin{proof}
Before starting, let us note that, under the assumptions, $\nabla^{(k)}\BR$ is balanced for any $k\ge 0$ (lemma~\ref{lemma_VSI}), while $\nabla^{(l)}\BS$ is 1-balanced  for any $l\ge 0$ (lemma~\ref{Sbalanced}). This will be implicitly used in the following.

First, in the case of $p=0$, such form would be a curvature scalar, which here vanishes since $\Bg$ is $VSI$. The dual case $p=D$ is treated analogously.
Let us thus discuss the $0<p<D$ case.
If there was a non-vanishing $p$-form $\BH[\BR, \nabla \BR, \dots]$ (with boost order at least $(-1)$), it would be necessarily at most linear in $\nabla^{(k)}\BC$, $k \geq 0$, and from the Ricci identity, it follows that covariant derivatives in $\nabla^{(k)} \BC$ effectively commute (i.e., up to terms of b.w. -2). 

Consider the $p\leq 2$ case. Since $\BC$ is traceless, there are necessarily at least two contractions of a derivative index with a Weyl tensor index within $\nabla^{(k)}\BC$. After commuting derivatives and employing \eqref{3bianchi}, we observe that such a contraction is of boost order $(-2)$ and $\BH$ vanishes. 

Now, let us discuss the $p>2$ case. Assume that we are able to obtain some non-vanishing rank-$p$ contraction of $\nabla^{(k)}\BC$. In order to produce a $p$-form, it has to be antisymmetrized over all remaining $p\geq 3$ indices. But now the Bianchi identities come into play
\begin{equation}\label{bianchi}
R_{a[bcd]} = 0, \qquad R_{ab[cd;e]}=0. 
\end{equation}
Since covariant derivatives in $\nabla^{(k)}\BC$ effectively commute, the antisymmetrization has to be performed over at most one derivative index and at least two Weyl tensor indices. But from \eqref{bianchi}, it follows that (after shuffling the derivatives if needed) the result is zero anyway. 
\end{proof}

\begin{lemma}\label{rank2FC}
Let $\Bg$ be a Weyl type III, Ricci type N Kundt metric and $\BF$ be an aligned null $p$-form. There is no non-vanishing symmetric rank-2 contraction of $\nabla^{(k)}\BC \otimes \BF$ and $\nabla^{(k)}\BR \otimes \BF$ for $k\geq 0$.
\end{lemma}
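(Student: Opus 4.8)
The plan is to reduce everything to a boost-weight (b.w.) analysis in the spirit of the proof of lemma~\ref{formlemma}, now carrying along the extra factor $\BF$.

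First I would dispose of the Riemann case by reducing it to the Weyl case. Since $\Bg$ is $VSI$ we have $R=0$, so by \eqref{S} the Riemann tensor splits as $R_{abcd}=C_{abcd}+\tfrac{2}{D-2}\big(g_{a[c}S_{d]b}-g_{b[c}S_{d]a}\big)$, and hence $\nabla^{(k)}\BR$ is a sum of $\nabla^{(k)}\BC$ and metric factors times $\nabla^{(k)}\BS$. By lemma~\ref{Sbalanced} the latter is $1$-balanced (boost order $\le-2$), so the corresponding contractions of $\nabla^{(k)}\BS\otimes\BF$ have boost order $\le-1$ and are strictly more suppressed than the Weyl contributions; they are moreover killed by the identity $\ell^aF_{a\dots c}=0$ (valid for the aligned null $\BF=\bl\wedge\boldsymbol\phi$ with $\boldsymbol\phi$ transverse) whenever one of the $\ell$'s carried by $S_{ab}=\omega'\ell_a\ell_b$ is contracted into $\BF$. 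It therefore suffices to treat $\nabla^{(k)}\BC\otimes\BF$.

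Here the key inputs are: $\nabla^{(k)}\BC$ is balanced, in particular has boost order $\le-1$ (lemma~\ref{lemma_VSI}), while the aligned null form $\BF$ sits at a single b.w. (namely $+1$, its nonzero frame components being $F_{0i\dots j}$). Thus the product has boost order $\le0$, and so does any symmetric rank-$2$ contraction $T_{ab}$; in particular $T_{\ell\ell}=T_{\ell i}=0$. To remove the b.w.-$0$ part ($T_{\ell n}$, $T_{ij}$) I would repeat the mechanism of lemma~\ref{formlemma}: covariant derivatives inside $\nabla^{(k)}\BC$ commute up to b.w.-$(-2)$ terms, any contraction of a derivative index with a Weyl index lowers the b.w.\ by two via \eqref{3bianchi}, the tracelessness of $\BC$ annihilates Weyl--Weyl contractions, and the first Bianchi identity in \eqref{bianchi} together with the fact that $\BF$ carries no $\bn$-index obstruct the remaining closures. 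This forces boost order $\le-1$.

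It remains to kill the b.w.-$(-1)$ component ($T_{ni}$) and the b.w.-$(-2)$ component ($T_{nn}\propto\ell_a\ell_b$). For these I would use the genuinely balanced structure of $\nabla^{(k)}\BC$ (its leading part being annihilated by $\tho$) together with $\ell^aF_{a\dots c}=0$ and, once more, \eqref{bianchi}--\eqref{3bianchi}, to show that no admissible pairing of indices produces a nonvanishing such tensor. I expect this last step to be the main obstacle: unlike in lemma~\ref{formlemma}, the extra $p$-form factor enlarges the list of index placements, and one must check exhaustively that every symmetric rank-$2$ contraction landing at b.w.\ $0,-1$ or $-2$ is blocked either by the tracelessness and Bianchi identities of $\BC$, by the aligned-null structure of $\BF$, or by $1$-balancedness. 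Assembling the three b.w.\ levels then gives $T_{ab}=0$, and with the reduction of the first paragraph the same conclusion holds for $\nabla^{(k)}\BR\otimes\BF$.
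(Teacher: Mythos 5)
There is a genuine error at the heart of your boost-weight bookkeeping: an aligned \emph{null} $p$-form does not sit at b.w.\ $+1$. In the paper's (standard) convention, $\BF=\bl\wedge\bff$ with $\ell^aF_{ab\dots c}=0$ has \emph{only} components $F_{1i\dots j}$, i.e.\ boost order $-1$ (this is stated explicitly below \eqref{nullF}: a null form ``possesses only components of b.w.\ $-1$''); the components $F_{0i\dots j}$ you invoke vanish identically, since they are obtained by contracting $\ell_{[a}f_{b\dots c]}$ with $\ell^a$. This is not a harmless convention choice: with the correct assignment, $\nabla^{(k)}\BC\otimes\BF$ has boost order $\le-1+(-1)=-2$ from the outset, so any symmetric rank-2 contraction can only have an $\ell_a\ell_b$ part, and the entire content of the lemma is to kill that single component. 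Your plan instead spends its effort chasing putative b.w.\ $0$ and $-1$ components that do not exist, and the step you yourself flag as ``the main obstacle'' --- the exhaustive elimination of contractions landing at b.w.\ $0,-1,-2$ --- is never carried out; it remains a sketch. So the proposal both rests on a false premise and is incomplete precisely where the real work lies.

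For comparison, the paper's proof starts from the correct fact that \emph{both} $\BF$ and $\nabla^{(k)}\BC$ have boost order $-1$, and then disposes of the residual $\ell_a\ell_b$ component by a combinatorial case analysis: symmetry of the target tensor plus skew-symmetry of $\BF$ force at most one free $\BF$-index; for $p>3$ the contractions antisymmetrize $\nabla^{(k)}\BC$ over at least three indices, which vanishes by \eqref{bianchi} and effective commutativity of the derivatives (as in lemma~\ref{formlemma}); for $p\le3$ either there is an internal contraction of $\nabla^{(k)}\BC$, which after commuting derivatives reduces to $\nabla^aC_{abcd}$ or $\Box C_{abcd}$ of boost order $-2$ by \eqref{3bianchi} (pushing the total below $-2$, hence zero), or all $\BF$-indices are contracted into $\nabla^{(k)}\BC$ and skew-symmetry plus \eqref{bianchi} finish the job. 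Your first paragraph (reduction of $\nabla^{(k)}\BR$ to the Weyl case via the Ricci decomposition and type N of $\BS$) is sound in spirit and matches the paper's closing remark, but note that with the correct weights it is a triviality --- boost order $\le-2+(-1)=-3$ kills everything --- whereas under your accounting even this step is left with unhandled $T_{1i}$, $T_{11}$ terms in which no $\ell$ from $S_{ab}$ is contracted into $\BF$.
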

\begin{proof}
Both $\BF$ and $\nabla^{(k)}\BC$ are of boost order $(-1)$. 
Due to skew-symmetry of $\BF$, at most one of its indices can be left uncontracted, while each of the rest of the indices of $\BF$ has to be contracted with 
some index of $\nabla^{(k)}\BC$. 
Moreover, covariant derivatives of $\BC$ again effectively commute. 

If $p>3$, this necessarily yields antisymmetrization of $\nabla^{(k)}\BC$ over at least 3 indices, which is zero due to Bianchi identities and effective commutativity of covariant derivatives of $\BC$, as we saw in the proof of lemma \ref{formlemma}. 

For $p \leq 3$, there is either one index of $\BF$ left uncontracted (and hence there is necessarily a contraction of indices within $\nabla^{(k)}\BC$) or each of the indices of $\BF$ is contracted with some index of $\nabla^{(k)}\BC$.  
However, any contraction within $\nabla^{(k)}\BC$ will eventually (after commuting the derivatives) vanish, 
since $\nabla^aC_{abcd}$, and consequently also $\Box C_{abcd}$, are (recall \eqref{3bianchi}) of boost order $(-2)$. The first case thus cannot produce any non-vanishing result. 
In the second case, one can easily verify that the corresponding contraction vanishes again due to skew-symmetry of $\BF$ and Bianchi identities \eqref{bianchi}. 

That the same result holds also for $\nabla^{(k)}\BR \otimes \BF$ follows from the Ricci tensor being of type N (and a trivial b.w. counting).
\end{proof}

\begin{lemma}
\label{rank2F}
Let $\Bg$ be a Weyl type III, Ricci type N Kundt spacetime and $\BF$ an aligned null Maxwell $p$-form. If $\nabla \BF$ is $1$-balanced, then all non-vanishing symmetric rank-2 tensors constructed from $\BF$ and its covariant derivatives are of second order.
\end{lemma}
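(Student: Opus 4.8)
The plan is to reduce to monomials and then run a boost-weight (b.w.) counting argument sharpened by the $1$-balanced hypothesis, in the same spirit as lemmas~\ref{formlemma} and \ref{rank2FC}. Any symmetric rank-$2$ tensor built from $\BF$ and its covariant derivatives is a sum of monomials, each a full contraction (leaving two free symmetric indices) of a product $\nabla^{(k_1)}\BF\otimes\cdots\otimes\nabla^{(k_m)}\BF$, possibly also involving the metric and the volume element; its order is $\sum_{i=1}^m(k_i+1)$. Since the polynomial scalar invariants of a null $\BF$ vanish, any term of the form $g_{ab}\times(\text{scalar})$ drops out, so it suffices to show that every such monomial of order $\ge 3$ vanishes (there is no nonzero symmetric rank-$2$ monomial of order $\le 1$).

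First I would record the b.w.\ of the building blocks. The field $\BF$ is aligned and null, hence of boost order $-1$; since $\nabla\BF$ is $1$-balanced, lemma~\ref{lemma_1deriv} gives that $\nabla^{(k)}\BF$ is balanced for every $k\ge 1$, so each factor has boost order $\le -1$. Because contraction with the metric (and the volume element) preserves the total b.w.\ and cannot raise the boost order, a monomial with $m$ factors has boost order at most $-m$. A symmetric rank-$2$ tensor has minimal b.w.\ equal to $-2$ (its $\ell_a\ell_b$ component), so a nonvanishing monomial forces $m\le 2$.

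For $m=2$ the monomial has boost order $\le -2$ and can therefore survive only through its $\ell_a\ell_b$ component, i.e.\ it must be proportional to the transverse contraction of the leading (b.w.\ $-1$) parts of the two factors $\nabla^{(k_1)}\BF$ and $\nabla^{(k_2)}\BF$. If $k_1=k_2=0$ this is precisely the order-two Maxwell term $F\indices{_{ac\dots d}}F\indices{_b^{c\dots d}}\propto\ell_a\ell_b$, the one allowed survivor; note that one cannot appeal to the $VSI$ property to kill the derivative cases, since the surviving object is a \emph{transverse} contraction (like $\mathcal{F}^2\neq 0$) rather than a spacetime scalar invariant. If instead $k_1+k_2\ge1$ (order $\ge 3$), at least one factor carries a derivative, and the relevant contraction of leading parts is controlled by the $1$-balanced character of $\nabla\BF$: the sharpened content of remark~\ref{rem_DFDF_2} (equivalently $\Box\BS=0$, i.e.\ $\nabla_c F_{ad\dots e}\nabla^c F\indices{_b^{d\dots e}}=0$) forces, after integration by parts and use of the null Maxwell equations, this leading transverse contraction to drop below b.w.\ $-2$ and hence to vanish. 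The remaining case $m=1$ of order $\ge3$ is linear in $\BF$; here the tensor has boost order $\le -1$, and both a putative b.w.\ $-1$ part (of the form $\ell_{(a}v_{b)}$ with $v$ transverse) and a b.w.\ $-2$ part ($\ell_a\ell_b$ times a scalar linear in $\BF$) are removed by the Bianchi identities \eqref{bianchi}, the Maxwell equations $\nabla^aF_{ab\dots c}=0$, and the skew-symmetry of $\BF$, exactly as in lemmas~\ref{formlemma} and \ref{rank2FC} (a single $p$-form admits no nonzero scalar linear invariant, and its surviving vectorial contraction vanishes after commuting derivatives modulo b.w.\ $\le-3$ terms).

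The main obstacle is the family of quadratic ($m=2$) monomials that saturate the b.w.\ bound, i.e.\ whose boost order equals the minimal value $-2$ allowed for a symmetric rank-$2$ tensor while their order exceeds two; these are invisible to pure b.w.\ counting and are eliminated only by invoking the full strength of the $1$-balanced hypothesis on $\nabla\BF$ (remark~\ref{rem_DFDF_2}), which is what makes the contraction of the two leading parts vanish. The single-factor higher-derivative terms are a milder obstacle, dispatched by the Maxwell and Bianchi identities together with the effective commutativity of covariant derivatives modulo higher-negative-b.w.\ terms.
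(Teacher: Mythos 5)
Your skeleton (decompose into monomials, kill most of them by boost-weight counting, treat the survivors with Maxwell's equations, commutators and lemma~\ref{rank2FC}) is the same as the paper's, but your bookkeeping of the $1$-balanced hypothesis is wrong, and this produces a genuine gap. A $1$-balanced tensor has boost order $\le -2$, not $\le -1$, and lemma~\ref{lemma_1deriv} (applicable because a Weyl type III, Ricci type N Kundt metric is VSI, hence degenerate Kundt) propagates $1$-balancedness under differentiation --- its ``balanced 1-tensor'' is a $1$-balanced tensor, as in \cite{HerOrtPra18}, and this is how the paper uses it. So every factor $\nabla^{(k)}\BF$ with $k\ge 1$ has boost order $\le -2$; indeed even for $k=1$ you only use $\le -1$, which already under-reads the hypothesis itself. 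With the correct bound, every two-factor monomial containing at least one derivative has boost order $\le -1+(-2)=-3$, while a symmetric rank-2 tensor has no components of b.w.\ below $-2$; such terms therefore vanish by pure counting, which is exactly how the paper disposes of them. Your substitute argument for these terms --- the step you yourself flag as the ``main obstacle'' --- does not work: remark~\ref{rem_DFDF_2} controls only the single contraction $\nabla_c F_{ad\dots e}\nabla^c F\indices{_b^{d\dots e}}$, whereas a general rank-2 contraction of $\nabla^{(k_1)}\BF\otimes\nabla^{(k_2)}\BF$ may contract derivative indices against $\BF$ indices of the other factor or leave derivative indices free, and ``integration by parts'' is not an available move when establishing a pointwise tensorial identity. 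In short, you resolve the easy case with an invalid argument.

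The genuinely hard case is the one your proposal compresses into a single sentence: monomials linear in one factor $\nabla^{(k)}\BF$, $k\ge 1$. These have boost order $\le -2$ (not $\le -1$, so there is no $\ell_{(a}v_{b)}$ part to worry about) and are invisible to counting. The paper's argument is: for $1<p<D-1$, symmetry of the two free indices together with total antisymmetry of $\BF$ forces at least one contraction of an $\BF$ index with a derivative index; one then commutes derivatives --- commutators hitting $\nabla^{(k)}\BF$ with $k>0$ produce b.w.\ $\le -3$ terms, while $[\nabla,\nabla]\BF$ produces contractions of $\nabla^{(l)}\BC\otimes\BF$ that vanish by lemma~\ref{rank2FC} --- until Maxwell's equation $\nabla^a F_{ab\dots c}=0$ can be applied. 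Your sketch gestures at this mechanism, but it omits entirely the case $p=1$ (and, by duality, $p=D-1$), where no $\BF$-index/derivative-index contraction is forced (e.g.\ both free indices can sit on derivatives, as in $\nabla_{(a}F_{b)}$ and its higher-derivative analogues); there the paper needs a separate argument, shuffling derivatives into $\nabla^{(k-2)}\Box\BF$ and invoking Maxwell's equations together with the Weitzenb\"ock identity to conclude $\Box\BF=0$. Without that, your case analysis is incomplete.
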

\begin{proof}
 By simple b.w. counting, terms cubic in $\BF$ and quadratic in $\nabla^{(k)}\BF$ ($k>0$) cannot contribute (and similarly for higher powers), while terms quadratic in $\bF$ are obviously of second order.  Terms linear in $\BF$ cannot contribute because of its total antisymmetry. It remains to be shown that also terms linear in $\nabla^{(k)}\BF$ do not contribute. Let us first discuss the case $1<p<D-1$. By the symmetry of the indices there must be at least one contraction of an index of $\BF$ with one derivative index. The idea is thus to use commutators of covariant derivatives and the Maxwell equations to show that all such terms vanish. By the Ricci identity and 1-balancedness of $\nabla^{(k)}\BF$, commutators $[\nabla,\nabla]\nabla^{(k)}\BF$ with $k>0$ are  
(recalling also lemmas~\ref{lemma_1deriv} and \ref{lemma_VSI}) of b.w. $-3$ and therefore do not contribute. The only non-trivial commutator is thus $[\nabla,\nabla]\BF$ (and its derivatives). This gives terms which are contractions of $\nabla^{(l)}\BC \otimes \BF$ for $l\geq 0$ (up to terms of b.w. $-3$), which indeed do not contribute thanks to lemma~\ref{rank2FC}. This completes the proof for $1<p<D-1$. When $p=1$ (or, by duality, $p=D-1$), symmetric 2-tensors can be constructed out of $\nabla^{(k)}\BF$ even without contracting an index of $\BF$ with one derivative index. For $k=1$ this gives the term $\nabla_{(a}F_{b)}$, which is of order 2. For $k>1$ (which requires $k\ge3$) there is at least a contraction between two derivative indices. Similarly as above, derivatives in such terms can thus be shuffled to obtain $\nabla^{k-2} \Box \BF$, which vanishes thanks to Maxwell's equations and the Weitzenb\"ock identity (cf. eq.~(12) of \cite{HerOrtPra18}).\footnote{Throughout the proof we did not discuss explicitly terms constructed using the dual $(D-p)$-form $\star \BF$. However, all the steps still apply, since $\star \BF$ is automatically aligned with $\bF$ and inherits from it all the essential properties.} 
\end{proof}

\begin{remark}[Terms of second order]
For completeness, let us observe that, under the assumptions of lemma~\ref{rank2F}, terms quadratic in $\bF$ generically reduce to (constant multiples of) $F\indices{_{a c \dots d}} F \indices{_b^{c \dots d}}={\mathcal F}^2\bl\bl$. In the special case $n=2p$ with $p$ odd, another possible term is $F_{ac\ldots d}\star\!F_b^{\ c\ldots d}$, which is in general non-zero and different from $F_{ac\ldots d}F_b^{\ c\ldots d}$ (but still $\propto\bl\bl$). For $n=2p$ with $p$ even, instead, such a term vanishes identically thanks to the identity $F_{ac\ldots d}\star\!F_b^{\ c\ldots d}[1+(-1)^{p^2}]=\frac{1}{p}(F_{cd\ldots e}\star\!F^{cd\ldots e})g_{ab}$ (since $\BF$ is VSI and thus $F_{cd\ldots e}\star\!F^{cd\ldots e}=0$). We further note that, for $p=1$, the term $\nabla_{(a}F_{b)}$ is also proportional to $\ell_a\ell_b$ (just by b.w. counting), but in general different from $F_a F_b$. 
\end{remark}


\section{Rank-2 curvature tensors in recurrent spacetimes of Weyl type III and traceless Ricci type N with $C_{acde}C\indices{_{b}^{cde}} = 0$}

In \cite{Gursesetal13} (cf. also \cite{HerPraPra14}), it was shown that if $\Bg$ is a Weyl type N and traceless Ricci type N Kundt metric with a constant Ricci scalar (in which case $\Bg$ is necessarily  $CSI$, see Corollary~A.5 of \cite{KucPra17} and Remark~A.9 of \cite{HerOrtPra18}), then any symmetric rank-2 tensor constructed from the Riemann tensor and its covariant derivatives takes the form $T_{ab} = \lambda g_{ab} + \sum_{n=0}^N a_n \Box^n S_{ab}$, 
where $\lambda$ and $a_n$ are some constants and $N\in \mathbb{N}$. It has been recently shown that the assertion can be extended also to Weyl type III, provided the Weyl tensor satisfies certain conditions \cite{Kuchynkaetal18}. A special subcase of Proposition~6 of \cite{Kuchynkaetal18} (cf. also~(14) therein), useful for our purposes, can be formulated as
\begin{theorem}[On symmetric rank-2 tensors \cite{Kuchynkaetal18}]
\label{Tform}
Let $\Bg$ be a Weyl type III and traceless Ricci type N metric such that: (i) the mWAND is recurrent; (ii) $C_{acde}C\indices{_{b}^{cde}} = 0$. Then any symmetric rank-2 tensor constructed from the Riemann tensor and its covariants derivatives of arbitrary order takes the form 
\begin{equation}\label{Tformeqn}
T_{ab} = \sum_{n=0}^N a_n \Box^n S_{ab}.
\end{equation}  
\end{theorem}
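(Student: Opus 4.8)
The plan is to work in the boost-weight (b.w.) grading adapted to the recurrent mWAND $\bl$ and to show, level by level, that all b.w. components of $T_{ab}$ above $-2$ are forced to vanish while the $-2$ component is forced into the stated form. First I would write $T_{ab}$ as a sum of monomials in $\nabla^{(k)}\BR$ and split each curvature factor into its Weyl part $\BC$, traceless-Ricci part $\BS$ and scalar-curvature part (proportional to $R$ times products of the metric). Since $R$ is constant here (so that \eqref{1bianchi}--\eqref{3bianchi} and lemma~\ref{Sbalanced} apply), the only effect of the scalar-curvature part is to multiply monomials by constants and to insert factors of $g_{ab}$; a monomial proportional to $g_{ab}$ carries a scalar curvature invariant as coefficient, which in the $VSI$ case relevant to theorem~\ref{minuniversal} vanishes, which is exactly why \eqref{Tformeqn} carries no $\lambda g_{ab}$ term. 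For the surviving monomials the two free indices sit on genuine curvature factors, and the building blocks have definite boost order: by lemma~\ref{lemma_VSI} each $\nabla^{(k)}\BC$ is balanced (boost order $\le-1$), and by lemma~\ref{Sbalanced} each $\nabla^{(k)}\BS$ is $1$-balanced (boost order $\le-2$).

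Next I would run the b.w. count. Boost orders add under tensor products and do not increase under contraction, so a monomial built from $q\ge1$ curvature factors has boost order $\le-q\le-1$; hence the b.w. $0,+1,+2$ components of $T_{ab}$ vanish identically. The b.w. $-1$ component can only come from a single factor, which must be of Weyl type (a lone $\nabla^{(k)}\BS$ is already $1$-balanced). A symmetric rank-$2$ contraction linear in $\nabla^{(k)}\BC$ at b.w. $-1$ is then excluded exactly as in the proofs of lemmas~\ref{formlemma} and \ref{rank2FC}: tracelessness of $\BC$ forces an internal contraction, contracting a derivative index with a Weyl index produces a divergence that drops the boost order to $-2$ by \eqref{3bianchi}, contracting two Weyl indices gives zero, and contracting two derivative indices commutes with the trace and again yields zero. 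Thus the $-1$ part vanishes and only the $-2$ part of $T_{ab}$ remains.

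It remains to analyse the $-2$ part, which by the count can receive contributions only from (a) products $\nabla^{(k)}\BC\ast\nabla^{(l)}\BC$ of two Weyl factors taken at their leading $(-1)$ order, (b) single factors $\nabla^{(k)}\BC$, and (c) factors built from $\nabla^{(k)}\BS$ alone. For (c) I would keep the two free indices on $S_{ab}=\omega'\ell_a\ell_b$ and contract the remaining derivative indices pairwise; this forces an even number of derivatives and, after shuffling, produces $\Box^{n}S_{ab}$ together with commutator corrections. Here recurrence of $\bl$ (equivalently $\tau_i=0$) is essential: it guarantees that $\Box$ preserves the $\ell_a\ell_b$ structure and that the commutator terms are again admissible, so the reduction closes by induction on the derivative order, with the combinatorial b.w.-$0$ prefactors producing the constants $a_n$. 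For (b) I would use \eqref{3bianchi} repeatedly to convert every Weyl divergence into a derivative of $\BS$, reducing such terms to the same $\{\Box^nS_{ab}\}$ family.

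The hard part will be the quadratic Weyl sector (a). I would show that, for a type III Weyl tensor with recurrent $\bl$, the leading $(-1)$ part of $\BC$ is controlled by the aligned type III Weyl data, so that every symmetric rank-$2$ contraction quadratic in it collapses, after using the first Bianchi identity $C_{a[bcd]}=0$ and the symmetries and tracelessness of $\BC$, to a multiple of $C_{acde}C\indices{_b^{cde}}$. Hypothesis (ii) then annihilates this whole sector. The remaining delicate point is the bookkeeping in (b): one must verify that the Bianchi conversions, combined with the recurrence identities, never generate a tensor structure outside the span of $\{\Box^nS_{ab}\}$. Granting the exhaustive reduction of (a) to $C_{acde}C\indices{_b^{cde}}$ and of (b),(c) to $\Box^nS_{ab}$, one concludes $T_{ab}=\sum_{n=0}^N a_n\Box^nS_{ab}$.
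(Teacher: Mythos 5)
Your overall skeleton (boost-weight counting, reduction of the pure-Ricci sector to $\Box^n S_{ab}$, disposal of the quadratic Weyl sector via hypothesis (ii)) parallels the paper's proof, but two of your steps contain genuine gaps. First, your treatment of the scalar curvature is circular: you \emph{assume} $R$ is constant and then discard the $g_{ab}$-proportional terms because their scalar coefficients vanish ``in the $VSI$ case relevant to theorem~\ref{minuniversal}''. Theorem~\ref{Tform} is a standalone statement whose hypotheses do not include the VSI property, so you cannot import it from the application. What is actually needed is to \emph{derive} the vanishing of $R$ from recurrence: $\tau_i=0$ forces the Ricci scalar to vanish (cf.\ Remark~A.9 of \cite{HerOrtPra18}), after which the spacetime is Kundt of Riemann type III, hence VSI, and every scalar coefficient of $g_{ab}$ vanishes. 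Relatedly, you use lemma~\ref{Sbalanced} in a frame adapted to the recurrent mWAND, which presupposes that the Ricci null direction is aligned with it; this alignment is not among the hypotheses and must be established first (the paper invokes proposition~3.1 of \cite{KucPra16}).

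Second, and more seriously, your sector (a) --- which you correctly flag as the hard part --- is not provable by the tools you propose. It is false that an arbitrary symmetric rank-2 contraction of $\nabla^{(k)}\BC\otimes\nabla^{(l)}\BC$ collapses, by the first Bianchi identity and the symmetries and tracelessness of $\BC$ alone, to a multiple of $C_{acde}C\indices{_{b}^{cde}}$: terms such as $\nabla_aC_{cdef}\nabla_bC^{cdef}$ or $\nabla_fC_{acde}\nabla^fC\indices{_{b}^{cde}}$ involve transverse derivatives $\dho_i$ of the boost-weight $(-1)$ Weyl components, and no purely algebraic manipulation turns these into the underived invariant killed by (ii). One needs differential consequences of (ii) combined with the Bianchi equation \eqref{3bianchi}, commutators, and the Kundt/recurrence structure; this is precisely the step-by-step balanced-scalar analysis of section~5.1 of \cite{HerPraPra14} (extended from the Einstein case to Ricci type N), which the paper invokes and which constitutes the real technical content of the theorem. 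Note also that this sector cannot be quarantined as you suggest: in your sectors (b) and (c), reordering covariant derivatives via the Ricci identity generates precisely such quadratic Weyl terms (this is how the reduction of $\nabla^{(k)}\BR$ following \cite{Gursesetal13} is organized in the paper), so the induction you describe only closes once sector (a) is fully under control.
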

For self-containedness, let us present a proof tailored to this special case.
\begin{proof}
Before starting we observe that the Weyl and Ricci tensors are necessarily aligned thanks to proposition~3.1 of \cite{KucPra16}. Then, the line of the proof will be similar to that of \cite{Gursesetal13}. However, in contrast with the Weyl type N case, also various contractions of the Weyl tensor and its covariant derivatives can in principle contribute to $\BT$ \cite{HerPraPra14}. But under the additional conditions, we will show that any of these actually vanishes, so that one is left with $\BT$ of the form \eqref{Tformeqn}. 

First, $\tau_i = 0$ implies that the Ricci scalar vanishes (cf., e.g., Remark~A.9 of \cite{HerOrtPra18}). Also,  $\BC$ is balanced and $\BS$ is 1-balanced (lemma \ref{Sbalanced}), and hence the only possible contributions to $\BT$ come from contractions of $\nabla^{(k)}\BR$ and of $\nabla^{(k)} \BC \otimes \nabla^{(l)}\BC$ with $k,l \geq 0$. In particular, $\BT$ is traceless. 

Now, let us focus on contractions of $\nabla^{(k)}\BR$ (clearly, $k$ has to be even). 
From the Ricci identity, it is obvious that any change in the order of covariant derivatives in $\nabla^{(k)}\BR$ produces only terms of type $\nabla^{(m} \BC \otimes \nabla^{(l)}\BC$. 
Following the procedure sketched in \cite{Gursesetal13} with use of \eqref{1bianchi} and $\nabla^b S_{ab} = 0$ (since $\BS$ is 1-balanced),
any contraction of $\nabla^{(k)}\BR$ can be cast in the form linear in $\Box^{k/2} \BS$ plus terms quadratic in the Weyl tensor and its covariant derivatives. 

At this moment, to finish the proof of the assertion, it is sufficient to show that all rank-2 contractions of $\nabla^{(k)} \BC \otimes \nabla^{(l)}\BC$ vanish. This can be done employing \eqref{3bianchi}
and following step by step the procedure of section 5.1 in \cite{HerPraPra14}. In this manner, one obtains an extension of proposition 5.6 of \cite{HerPraPra14} to the Ricci type N case, which completes the proof. 
\end{proof}

\section{On covariant derivatives of null $p$-forms in Kundt spacetimes}  
\label{app_nullF}
In this section, we will provide some useful results on null $p$-forms and their covariant derivatives. A $p$-form $\bF$ is {\em null} iff it can be written as \cite{OrtPra16}
\be
	\bF=\bl\wedge\bff , \qquad \ell_a\ell^a=0=f_{a\ldots b}\ell^a ,
	\label{nullF}
\ee
where $\bff$ is a $(p-1)$-form. In other words, $\bF$ possesses only components of b.w. $-1$ \cite{Durkeeetal10}. Obviously this is possible only for $1\le p\le D-1$.

\begin{remark}[Maxwell's equations]
If one assumes that $\bl$ in \eqref{nullF} is Kundt, in a null frame adapted to $\bl$ the GHP Maxwell equations reduce to \cite{Durkeeetal10} (cf. also eqs.~(2.16)--(2.18) of \cite{KucPra17} -- $\phip_{i j \dots k}$ is denoted $\varphi'_{i j \dots k}$ in \cite{Durkeeetal10,KucPra17})
\beqn
 & & \dho_i \phip_{ij \dots k} =\tau_i \phip_{i j \dots k} , \label{Max1} \\
 & & \dho_{[i}\phip_{j \dots k]} =\tau_{[i}\phip_{j \dots k]} , \label{Max2} \\
 & & \tho\phip_{i\dots j}=0 . \label{Max3}
\eeqn
\end{remark}

If $\bl$ in \eqref{nullF} is Kundt, $\nabla \BF$ possesses generically non-zero components of b.w. $0,-1,-2$. More precisely, defining the standard directional derivatives $D \equiv \ell^a \nabla_a$, $\T\equiv n^a \nabla_a$, $\delta_i \equiv m^{(i)a} \nabla_a$,
\begin{lemma}\label{dFcomponents}
Let $\bF=\bl\wedge\bff$ be a null $p$-form and $\bl$ a Kundt vector field. Then, in a null frame adapted to $\bl$ (i.e., with $\Bm_{(0)} = \Bell$ but otherwise arbitrary) 
\begin{enumerate}
\item $D\BF$ is null with frame components $(DF)_{1i \dots j} = \tho\phip_{i\dots j}$;
\item\label{dFcomponents_ii} $\delta_i \BF$ is null with frame components $(\delta_i F)_{1j \dots k} = \dho_i \phip_{j \dots k}$; 
\item $\T \BF$ is of type II with frame components $(\T F)_{01j \dots k} =\tau_i \phip_{i j \dots k}$, $(\T F)_{ij \dots k} =p\tau_{[i}\phip_{j \dots k]}$ and  $(\T F)_{1i \dots j} = \tho^\prime \phip_{i \dots j}$.
\end{enumerate}
\end{lemma}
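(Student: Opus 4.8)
The plan is to compute the frame components of the covariant derivative $\nabla\BF$ directly in the GHP formalism of \cite{Durkeeetal10}, in a null frame adapted to $\bl$ with $\Bm_{(0)}=\bl$, and then contract the derivative index in turn with $\bl$, $\Bm_{(i)}$ and $\bn$; note that only the defining properties \eqref{nullF} and the Kundt condition on $\bl$ will be used, and no field equation. First I would record from \eqref{nullF} that the only non-vanishing frame components of $\BF$ are $F_{1i\dots j}=\phip_{i\dots j}$, and then expand each directional derivative by Leibniz applied to the frame vectors, $(XF)_{b_1\dots b_p}=X\big(F_{b_1\dots b_p}\big)-\sum_{s}F\big(\dots,\nabla_X\Be_{(b_s)},\dots\big)$. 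Every resulting term is then either a GHP directional derivative of a component $\phip$ or a Ricci rotation coefficient multiplying a component in which one frame index has been rotated between the $\bl$-, $\bn$- and $\Bm_{(i)}$-slots.

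The decisive simplification is that $\bl$ is Kundt, hence geodesic, non-twisting, shear-free and non-expanding, so in the adapted frame the optical matrix vanishes, $L_{ij}=\rho_{ij}=0$, together with $\kappa_i=0$ (and, in an affinely parametrised gauge, the boost coefficient along $\bl$ drops out too). Metric compatibility shows that the rotation of a spatial frame index into the $\bn$-slot is governed, in the $\bl$- and $\Bm_{(i)}$-directions, precisely by $\kappa_i$ and by $\rho_{ij}$ respectively, both of which vanish. Hence for $D=\ell^c\nabla_c$ and $\delta_i=m^{(i)c}\nabla_c$ no component of boost order $0$ is generated, $D\BF$ and $\delta_i\BF$ remain null, and the single surviving components are the GHP derivatives of $\phip$, i.e.\ $(DF)_{1i\dots j}=\tho\phip_{i\dots j}$ and $(\delta_i F)_{1j\dots k}=\dho_i\phip_{j\dots k}$, which proves (i) and (ii).

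For $\T=n^c\nabla_c$ the situation genuinely changes, since it is in the $\bn$-direction that the coefficients $\tau_i=L_{i1}$ survive and raise the boost weight of an index. Tracking these rotations I would obtain exactly three contributions: the transverse part $\tau_i\Bm_{(i)}$ of $\nabla_{\bn}\bl$ rotates the $\bl$-slot of the $(01\dots)$ pattern and produces the boost-order $0$ component $(\T F)_{01j\dots k}=\tau_i\phip_{ij\dots k}$; the rotation of a spatial slot into the $\bn$-slot, summed over all $p$ form positions and re-antisymmetrised, produces the second boost-order $0$ component $(\T F)_{ij\dots k}=p\,\tau_{[i}\phip_{j\dots k]}$; and the plain GHP term gives the boost-weight $-1$ component $(\T F)_{1i\dots j}=\tho^\prime\phip_{i\dots j}$. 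A boost-weight $+1$ (i.e.\ $\bl$-indexed) component would require a two-unit boost-weight shift, unavailable to the surviving connection coefficients, which shift the boost weight by at most one unit; indeed every candidate such term carries a $\bl$-index of $\BF$ and so vanishes by \eqref{nullF}; thus $\T\BF$ is of type II, establishing (iii). I expect the main obstacle to be precisely this last bookkeeping for $\T\BF$: one must identify the $\tau_i$-mediated rotations correctly, reinstate the total antisymmetry of the $p$-form---the origin of the combinatorial factor $p$ and of the bracket in $\tau_{[i}\phip_{j\dots k]}$---and confirm that all candidate positive-boost-weight contributions cancel by the null algebraic structure $\ell_a\ell^a=0=f_{a\dots b}\ell^a$. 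By contrast, (i) and (ii) follow almost immediately once the Kundt vanishings are in hand.
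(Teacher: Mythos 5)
Your proposal is correct and takes essentially the same approach as the paper, whose proof of this lemma consists precisely of the direct frame-component computation of $\nabla\BF$ that you spell out. In particular, your use of the Kundt vanishings $\kappa_i=\rho_{ij}=0$ to show that $D\BF$ and $\delta_i\BF$ stay null, and your $\tau_i$-bookkeeping for $\T\BF$ (including the combinatorial factor $p$ in $p\,\tau_{[i}\phip_{j\dots k]}$ and the absence of positive boost-weight terms), correctly fill in the calculation the paper leaves implicit.
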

\begin{proof}
The result follows by a direct calculation of the various frame components of $\nabla \BF$.
\end{proof}

\begin{lemma}
 \label{lemma_DFDF}
 Let $\bF=\bl\wedge\bff$ be a null $p$-form and $\bl$ a Kundt vector field. If $\tho\phip_{i\dots j}=0$, then 
 \be
  \nabla_c F_{ad \dots e} \nabla^c F\indices{_b^{d \dots e}} =(\dho_i \phip_{j \dots k})(\dho_i \phip_{j \dots k})\ell_a\ell_b . 
	\label{DFDF}
 \ee
\end{lemma}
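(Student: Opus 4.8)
The plan is to compute $\nabla_c F_{ad\dots e}\nabla^c F\indices{_b^{d\dots e}}$ directly from the frame decomposition of $\nabla\BF$ provided by lemma~\ref{dFcomponents}. Under the hypothesis $\tho\phip_{i\dots j}=0$, part~(i) of that lemma tells us $D\BF=0$, so the $D$-derivative contributes nothing. The remaining directional derivatives are $\T$ and the $\delta_i$, and the key observation is a boost-weight (b.w.) bookkeeping argument: since $\BF$ is null it carries only b.w.\ $-1$, the tensor $\nabla_c F_{ad\dots e}\nabla^c F\indices{_b^{d\dots e}}$ has total b.w.\ $-2$, which forces it to be proportional to $\ell_a\ell_b$. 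Thus the content of the lemma is really to pin down the coefficient of $\ell_a\ell_b$.

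First I would write $\nabla_c = \ell_c\T + n_c D + m^{(i)}_c\delta_i$ (up to the usual sign conventions in the null frame~\eqref{g_null}), so that the contraction $\nabla_c(\cdot)\nabla^c(\cdot)$ splits, using $\ell^c\ell_c=0$, $\ell^c n_c=1$, $m^{(i)c}m^{(j)}_c=\delta_{ij}$, into a cross term involving $D\BF$ and $\T\BF$ together with a sum of $\delta_i\BF$ terms. Because $D\BF=0$, the cross term $D\cdot\T$ drops out entirely. The $\T\BF\cdot\T\BF$ piece would contribute at b.w.\ $-2$, but it comes paired with $n^c n_c=0$ in the metric contraction on the index $c$, so it too does not enter $\nabla_c\otimes\nabla^c$. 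Hence only the $\sum_i(\delta_i\BF)\otimes(\delta_i\BF)$ part survives.

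Next I would insert part~(\ref{dFcomponents_ii}) of lemma~\ref{dFcomponents}: $\delta_i\BF$ is null with the single nonvanishing frame block $(\delta_i F)_{1j\dots k}=\dho_i\phip_{j\dots k}$, i.e.\ $\delta_i\BF=\bl\wedge(\dho_i\bff)$ in an obvious notation. Contracting $(\delta_i F)_{ad\dots e}(\delta_i F)\indices{_b^{d\dots e}}$ and summing over $i$, the free indices $a,b$ must both be the $\bl$-index (everything else is contracted among the $m^{(j)}$ directions), producing exactly $(\dho_i\phip_{j\dots k})(\dho_i\phip_{j\dots k})\,\ell_a\ell_b$, which is the claimed right-hand side~\eqref{DFDF}.

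The main obstacle is purely bookkeeping: one must be careful with the combinatorial factor coming from the $(p-1)$ contracted antisymmetric indices $d\dots e$ and with the sign/normalization conventions relating the directional-derivative expansion of $\nabla_c$ to the null metric, to confirm that the numerical coefficient is exactly $1$ and not $1/(p-1)!$ or similar. I would resolve this by evaluating the contraction in the adapted frame component-by-component rather than abstractly, since lemma~\ref{dFcomponents} already supplies all components explicitly; the b.w.\ argument then guarantees no other structure can appear, so matching a single representative component fixes the coefficient and completes the proof.
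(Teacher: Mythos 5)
Your proposal is correct and follows essentially the same route as the paper: both invoke lemma~\ref{dFcomponents} with $\tho\phip_{i\dots j}=0$ to kill $D\BF$, note that the metric contraction over the derivative index $c$ then leaves only the $(\delta_i\BF)\otimes(\delta_i\BF)$ terms of \eqref{dFcomponents_ii}, and read off the coefficient of $\ell_a\ell_b$ from the frame components. (One cosmetic slip: with your decomposition $\nabla_c=\ell_c\T+n_cD+m^{(i)}_c\delta_i$, the $\T\BF\cdot\T\BF$ piece is suppressed by $\ell^c\ell_c=0$ rather than $n^cn_c=0$, but both vanish, so nothing changes.)
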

\begin{proof}
 Thanks to lemma~\ref{dFcomponents}, we know that $\nabla \BF$  has only components of negative b.w.. The contraction over $c$ in \eqref{DFDF} further ensures that only the components \eqref{dFcomponents_ii} of lemma~\ref{dFcomponents} contribute, and the result thus follows.
\end{proof}
\begin{remark}
 \label{rem_DFDF}
 The assumption $\tho\phip_{i\dots j}=0$ in lemma~\ref{lemma_DFDF} is satisfied identically if $\bF$ is a Maxwell field (eq.~\eqref{Max3}).
\end{remark}

The special case when $\nabla \BF$ has only components of b.w. $-2$ can be characterized as follows.
\begin{lemma}\label{FtypeNchar}
Let $\BF$ be a non-vanishing null $p$-form. Then, $\nabla \BF$ is of type N (necessarily aligned) iff $\bl$ is Kundt and the scalars $\tho\phip_{i\dots j}$,  $\dho_i \phip_{j \dots k}$, $\tau_i $ vanish.
\end{lemma}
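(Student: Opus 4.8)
The plan is to characterize when $\nabla\BF$ is of type N, meaning it possesses only components of boost weight $-2$. The strategy is to directly apply lemma~\ref{dFcomponents}, which already catalogues all frame components of $\nabla\BF$ according to boost weight, and then read off precisely which scalars must vanish to eliminate all components of boost weight $0$ and $-1$. Since type N (for a tensor whose lowest possible boost weight here is $-2$) means that only the bottom boost weight $-2$ survives, the proof is essentially a matter of matching the listed components against the required vanishing conditions, together with handling the preliminary structural assumption that $\bl$ must be Kundt.

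First I would address the ``if'' direction. Assuming $\bl$ is Kundt and $\tho\phip_{i\dots j}=0$, $\dho_i\phip_{j\dots k}=0$, $\tau_i=0$, I invoke lemma~\ref{dFcomponents}. Part~(i) gives $(DF)_{1i\dots j}=\tho\phip_{i\dots j}=0$, so the $D\BF$ contribution vanishes. Part~(ii) gives $(\delta_iF)_{1j\dots k}=\dho_i\phip_{j\dots k}=0$, killing the $\delta_i\BF$ contributions of boost weight $-1$. Part~(iii) then shows $(\T F)_{01j\dots k}=\tau_i\phip_{ij\dots k}=0$ (boost weight $0$) and $(\T F)_{ij\dots k}=p\tau_{[i}\phip_{j\dots k]}=0$ (boost weight $-1$), leaving only $(\T F)_{1i\dots j}=\tho'\phip_{i\dots j}$ of boost weight $-2$. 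Hence $\nabla\BF$ retains only boost weight $-2$ components and is of type N; since $\bl$ is then the unique aligned null direction, the alignment is automatic.

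For the ``only if'' direction, I assume $\nabla\BF$ is of type N. I would first argue that $\bl$ must be Kundt: if $\bl$ failed to be Kundt, the optical matrix would be nonzero, and the covariant derivative of the null form $\BF=\bl\wedge\bff$ would necessarily acquire components of boost weight higher than $-2$ (the nonvanishing of the expansion/shear/twist feeds directly into positive-boost-weight terms of $\nabla\bl$, hence of $\nabla\BF$), contradicting type N. Once $\bl$ is Kundt, lemma~\ref{dFcomponents} applies, and type N forces every listed boost weight $0$ and $-1$ component to vanish: $(DF)_{1i\dots j}=\tho\phip_{i\dots j}=0$, $(\delta_iF)_{1j\dots k}=\dho_i\phip_{j\dots k}=0$, and $(\T F)_{01j\dots k}=\tau_i\phip_{ij\dots k}=0$. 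Since $\BF$ is nonvanishing, the last condition together with the boost weight $-1$ component $(\T F)_{ij\dots k}=p\tau_{[i}\phip_{j\dots k]}=0$ forces $\tau_i=0$ (as $\phip$ itself is nonzero, a nonzero $\tau_i$ would survive in these contractions).

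The main obstacle I anticipate is the ``only if'' implication that $\bl$ must be Kundt, since lemma~\ref{dFcomponents} presupposes the Kundt condition and so cannot be cited directly here; one must instead analyze $\nabla\BF$ for a general (possibly non-Kundt) null $\bl$ and show that any nonzero optical data obstructs type N. The subtlety is to confirm that no conspiracy among the expansion, shear, twist, and the form $\bff$ can cancel all higher-boost-weight components while $\BF$ remains nonvanishing; I expect this to follow from the fact that the positive- and zero-boost-weight pieces of $\nabla\BF$ are governed by $\nabla_a\ell_b$ wedged with $\bff$, and genuinely non-Kundt optical data produces irreducible contributions there. The remaining step, extracting $\tau_i=0$ from the vanishing of $\tau_i\phip_{ij\dots k}$ and $\tau_{[i}\phip_{j\dots k]}$, requires a short argument that these contractions with a nonzero $\phip$ cannot vanish unless $\tau_i$ itself does.
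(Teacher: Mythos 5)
Your proposal has the same skeleton as the paper's proof: the ``if'' direction by reading off lemma~\ref{dFcomponents} is exactly what the paper does (it says ``reverse the steps''), and in the ``only if'' direction the paper likewise first secures that $\bl$ is Kundt with $\tho\phip_{i\dots j}=0$, then applies lemma~\ref{dFcomponents} to get $\dho_i\phip_{j\dots k}=0$ and $\tau_i\phip_{ij\dots k}=0=\tau_{[i}\phip_{j\dots k]}$, and finally concludes $\tau_i=0$ from $\phip_{i\dots k}\neq0$ (the paper asserts this last implication, noting only the $p=1$ caveat; your remark that it needs a short argument is right, and the argument is that the transverse metric is positive definite, so a vector whose wedge \emph{and} contraction with a nonzero form both vanish must itself vanish).

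The genuine gap is precisely the step you flag and then leave as an expectation: that type N of $\nabla\BF$ forces $\bl$ to be Kundt. Lemma~\ref{dFcomponents} presupposes Kundt, so it cannot deliver this, and ``genuinely non-Kundt optical data produces irreducible contributions'' is a conjecture, not a proof --- a priori the optical matrix could conspire with $\bff$ (e.g., lie in its kernel) so that all offending components cancel. The paper closes this hole by invoking Proposition~C.1 of \cite{OrtPra16}, which states exactly that a null $p$-form whose covariant derivative has no components of non-negative boost weight must have $\bl$ Kundt and $\tho\phip_{i\dots j}=0$. If you want to avoid the citation, the computation is short and worth doing: writing $\BF=\bl\wedge\bff$ and using $\ell^\nu F_{\nu\rho\dots\sigma}=0$, the b.w.\ $+1$ frame components of $\nabla\BF$ are built from the contraction $\kappa_i\phip_{ij\dots k}$ and the wedge $\kappa_{[i}\phip_{j\dots k]}$ of the non-geodesity $\kappa_i$ with $\bff$, while the b.w.\ $0$ components consist of $\tho\phip_{i\dots j}$ together with, for each fixed $m$, the contraction $\rho_{im}\phip_{ij\dots k}$ and the wedge $\rho_{[i|m|}\phip_{j\dots k]}$ of the $m$-th column of the optical matrix $\rho_{ij}$ with $\bff$. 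The same positive-definiteness argument as for $\tau_i$ then shows that their vanishing with $\phip\neq0$ forces $\kappa_i=0$ and $\rho_{ij}=0$, i.e.\ $\bl$ geodesic, shearfree, twistfree and expansionfree --- so your ``no conspiracy'' intuition is correct, but it is the crux of the ``only if'' direction and must be proved (or cited), not assumed.
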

\begin{proof}
The type N condition means that  $\nabla \BF$ possesses only components of b.w. $-2$ (which is possible only if $\nabla \BF$ is aligned with $\BF$, since $\BF$ is null). Proposition~C.1 of \cite{OrtPra16} implies that $\bl$ is Kundt and $\tho\phip_{i\dots j}=0$. Using lemma~\ref{dFcomponents} further gives $\dho_i \phip_{j \dots k}=0$ and $\tau_i \phip_{i j \dots k} = 0 =\tau_{[i}\phip_{j \dots k]}$. Since $\phip_{i \dots k}\neq0$, the last two equations imply $\tau_i = 0$. (For $p=1$ the equation $\tau_i \phip_{i j \dots k} = 0$ does not appear, but the conclusion is unchanged.) The other direction of the lemma can be proven by just reversing the above steps.
\end{proof}

\begin{remark}
 The fact that $\bl$ is Kundt and $\tau_i=0$ is equivalent to saying that $\bl$ is {\em recurrent}. In addition, note that, in particular, a null $\bF$ with $\nabla \BF$ of type N satisfies Maxwell's equations identically (cf. eqs.~\eqref{Max1}--\eqref{Max3}).
\end{remark}

\begin{lemma}\label{Maxwelltau}
Let $\Bg$ be a spacetime of Weyl type III and  $\BF$ an aligned null $p$-form $\BF$ such that $\nabla \BF$ is of type N. If $(\Bg,\BF)$ is a solution of the Einstein-Maxwell equations, necessarily $\Lambda=0$ and both $(\Bg,\BF)$ are $VSI$.  
\end{lemma}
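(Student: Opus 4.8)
The plan is to read off the optical and algebraic data encoded in the type-N condition, feed them into Einstein's equations, and then close the argument with the known characterizations of VSI spacetimes and null forms. First I would invoke Lemma~\ref{FtypeNchar}: since $\nabla\BF$ is of type N, $\bl$ is Kundt and $\tho\phip_{i\dots j}=\dho_i\phip_{j\dots k}=\tau_i=0$. In particular $\bl$ is recurrent, and (by the remark following that lemma) $\BF$ solves Maxwell's equations identically. These are the only consequences of the hypothesis on $\nabla\BF$ that I will need.

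Next I would insert the null field into the Einstein equations $G_{ab}+\Lambda g_{ab}=8\pi T^M_{ab}$. Nullity of $\BF$ gives $F^2=0$, so by \eqref{TM} the source collapses to $8\pi T^M_{ab}=\kappa_0 F_{ac\dots d}F\indices{_b^{c\dots d}}$, which is proportional to $\ell_a\ell_b$ since $\BF$ is null. Taking the trace yields the scalar relation $\tfrac12(2-D)R+D\Lambda=0$, while the trace-free part gives $S_{ab}=\kappa_0 F_{ac\dots d}F\indices{_b^{c\dots d}}\propto\ell_a\ell_b$, i.e.\ $\Bg$ is of traceless Ricci type N aligned with $\bl$. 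This is a short, purely algebraic manipulation.

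With traceless Ricci type N in hand, recurrence settles the cosmological constant: exactly as in the proof of Theorem~\ref{Tform}, $\tau_i=0$ forces $R=0$ (Remark~A.9 of \cite{HerOrtPra18}), and the trace relation then gives $\Lambda=0$ for $D\ge3$. At this point the Ricci tensor is purely type N, the Weyl tensor is type III, and the vanishing of $R$ removes the only boost-weight-zero piece of the Riemann tensor; hence $\BR$ is of aligned type III. Since $\bl$ is Kundt, $\Bg$ is a Kundt metric of Riemann type III and is therefore VSI by the characterization of \cite{Pravdaetal02,Coleyetal04vsi}.

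It remains to show that $\BF$ is VSI. Its algebraic invariants vanish because $\BF$ is null \cite{Hervik11,OrtPra16}. For the invariants involving derivatives I would argue by boost weight: $\BF$ has boost order $-1$ with its leading part annihilated by $\tho$ (from $\tho\phip=0$), so it is balanced, and in the degenerate Kundt background each covariant derivative keeps lowering the boost order---using $\tau_i=0$ together with Lemma~\ref{lemma_1deriv}---so that every $\nabla^{(k)}\BF$ is balanced. Any scalar polynomial invariant is a full contraction of such factors and must carry boost weight zero, which is impossible for a product of tensors of strictly negative boost order unless it vanishes; thus all invariants of $\BF$ vanish and $\BF$ is VSI. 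I expect this last step to be the most delicate point, since one must check that no commutator $[\nabla,\nabla]\nabla^{(k)}\BF$ reintroduces a non-negative boost-weight component (which is precisely where recurrence and the balancedness of the background curvature are used), whereas the deductions of $\Lambda=0$ and of the VSI property of $\Bg$ are essentially immediate once the aligned algebraic types are in place.
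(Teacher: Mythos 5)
Your derivation of $\Lambda=0$ and of the VSI property of $\Bg$ coincides with the paper's own proof: lemma~\ref{FtypeNchar} gives $\tau_i=0$, $\tho\phip_{i\dots j}=0$ and $\bl$ Kundt (hence recurrent); Einstein's equations with a null source give aligned traceless Ricci type N and $R\propto\Lambda$; Remark~A.9 of \cite{HerOrtPra18} then forces $R=0$, hence $\Lambda=0$; and Kundt plus aligned Riemann type III gives VSI via theorem~1 of \cite{Coleyetal04vsi}. Up to this point you are on the same route as the paper.

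The divergence, and the genuine gap, is in the last step. The paper disposes of the VSI property of $\BF$ by citing theorem~1.5 of \cite{OrtPra16} (a null $p$-form aligned with a degenerate Kundt $\bl$ and satisfying $\tho\phip_{i\dots j}=0$ is VSI). You instead sketch the underlying balanced-tensor argument, but your justification fails as stated: lemma~\ref{lemma_1deriv} takes a \emph{1-balanced} tensor as input, i.e.\ one of boost order at most $-2$ (with the corresponding $D$-derivative conditions), whereas $\BF$ has boost order $-1$ and is merely \emph{balanced}, so that lemma says nothing about $\nabla\BF$, let alone about $\nabla^{(k)}\BF$ for higher $k$. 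Moreover, covariant derivatives do not ``keep lowering the boost order'': generically $\nabla^{(k)}\BF$ has boost order $-2$ for every $k\ge1$, not $-k-1$; what is preserved under differentiation is balancedness, not a decreasing boost order. The statement your argument actually needs is that \emph{in a VSI spacetime the covariant derivative of a balanced tensor is again balanced} -- this is the key lemma of \cite{Pravdaetal02,Coleyetal04vsi}, and it is precisely what underlies the theorem of \cite{OrtPra16} that the paper cites. With that lemma in hand (applicable here because you have already shown $\Bg$ is VSI), your boost-weight counting at the end is correct: every scalar invariant is a full contraction, hence of boost weight zero, of factors of strictly negative boost order, and therefore vanishes. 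Alternatively, within this paper's toolbox you could note that lemma~\ref{1balcriteria} applies (its curvature hypotheses hold once $\Bg$ is VSI, and $\dho_i\phip_{j\dots k}=0$ by lemma~\ref{FtypeNchar}), so $\nabla\BF$ is 1-balanced and lemma~\ref{lemma_1deriv} then covers the second derivative -- but iterating to all orders still requires the balanced-derivative lemma of \cite{Pravdaetal02,Coleyetal04vsi}, or simply the citation the paper uses.
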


\begin{proof}
 From lemma~\ref{FtypeNchar} we have that $\bl$ is recurrent (and thus Kundt). The Einstein equations imply that the traceless Ricci type is N (and that $R$ is proportional to $\Lambda$), therefore the spacetime is Kundt degenerate. A non-vanishing Ricci scalar would require $\tau_i\neq0$ (cf., e.g., Remark~A.9 of \cite{HerOrtPra18}), therefore $\Lambda=0$. The Ricci type is thus N and the VSI property of $\Bg$ then follows immediately from theorem~1 of \cite{Coleyetal04vsi}. Finally, the VSI property of $\BF$ follows from theorem~1.5 of \cite{OrtPra16} (since $\tho\phip_{i\dots j}=0$ by lemma~\ref{FtypeNchar}).
\end{proof}

\begin{lemma}\label{typeNchar}
Let $\BF$ be a non-vanishing null Maxwell field. Then $\nabla \BF$ is of type N iff $\dho_i \phip_{j \dots k} = 0$  and $\bl$ is Kundt.
\end{lemma}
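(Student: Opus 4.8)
The plan is to reduce this statement to the general characterization in lemma~\ref{FtypeNchar}, exploiting that for a Maxwell field the GHP equations \eqref{Max1}--\eqref{Max3} hold once $\bl$ is Kundt. The point of the lemma is precisely that the Maxwell property lets one drop the conditions $\tho\phip_{i\dots j}=0$ and $\tau_i=0$ from lemma~\ref{FtypeNchar}, retaining only $\dho_i\phip_{j\dots k}=0$ and the Kundt condition.

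For the ``only if'' direction I would argue that no extra work is needed: if $\nabla\BF$ is of type N, then lemma~\ref{FtypeNchar} immediately yields that $\bl$ is Kundt and that all of $\tho\phip_{i\dots j}$, $\dho_i\phip_{j\dots k}$, $\tau_i$ vanish. In particular $\dho_i\phip_{j\dots k}=0$ with $\bl$ Kundt, which is the claimed conclusion (the Maxwell hypothesis is not even used here).

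For the ``if'' direction, assume $\dho_i\phip_{j\dots k}=0$ and $\bl$ Kundt. By lemma~\ref{FtypeNchar} it then suffices to establish in addition that $\tho\phip_{i\dots j}=0$ and $\tau_i=0$. Since $\bl$ is Kundt and $\BF$ is a Maxwell field, equations \eqref{Max1}--\eqref{Max3} are available. Equation \eqref{Max3} gives $\tho\phip_{i\dots j}=0$ at once. Substituting the assumption $\dho_i\phip_{j\dots k}=0$ into \eqref{Max1} and \eqref{Max2} forces $\tau_i\phip_{ij\dots k}=0$ and $\tau_{[i}\phip_{j\dots k]}=0$. Because $\BF\neq0$ we have $\phip_{i\dots k}\neq0$, so exactly as in the proof of lemma~\ref{FtypeNchar} these two relations imply $\tau_i=0$ (with the usual harmless modification when $p=1$, where the first relation is absent). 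Having produced all the scalars $\tho\phip_{i\dots j}=\dho_i\phip_{j\dots k}=\tau_i=0$ together with $\bl$ Kundt, lemma~\ref{FtypeNchar} returns that $\nabla\BF$ is of type N.

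I do not expect a genuine obstacle here, as the argument is purely algebraic once the GHP Maxwell equations are in hand. The only step requiring mild care is the deduction of $\tau_i=0$ from $\tau_i\phip_{ij\dots k}=0=\tau_{[i}\phip_{j\dots k]}$, which rests on the non-vanishing of $\bF$ and is treated precisely as in lemma~\ref{FtypeNchar}, including the separate bookkeeping in the degenerate $p=1$ case.
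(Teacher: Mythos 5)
Your proposal is correct and follows essentially the same route as the paper: the ``only if'' direction is immediate from lemma~\ref{FtypeNchar}, and for the ``if'' direction you use the GHP Maxwell equations \eqref{Max1}--\eqref{Max3} in a Kundt-adapted frame to supply the missing conditions $\tho\phip_{i\dots j}=0$ and $\tau_i=0$ (the latter from $\tau_i\phip_{ij\dots k}=0=\tau_{[i}\phip_{j\dots k]}$ with $\phip\neq0$, exactly as in the proof of lemma~\ref{FtypeNchar}), and then invoke lemma~\ref{FtypeNchar}. This matches the paper's proof, which is just a terser statement of the same argument.
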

\begin{proof}
Maxwell's equations \eqref{Max1}, \eqref{Max2} guarantee that, if a non-vanishing null solution $\BF$ in a Kundt spacetime satisfies $\dho_i \phip_{j \dots k} = 0$, then automatically also $\tau_i = 0$. 
By lemma~\ref{FtypeNchar}, the ``if'' assertion follows. The same lemma ensures that also the ``only if'' direction holds.
\end{proof}

\begin{lemma}\label{1balcriteria}
Let $\BF$ be a non-vanishing null Maxwell field in an aligned Weyl and traceless Ricci type III spacetime. Then $\nabla \BF$ is 1-balanced iff $\dho_i \phip_{j \dots k} = 0$ and $\bl$ is Kundt. 
\end{lemma}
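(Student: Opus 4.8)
The plan is to read the right-hand condition as the one appearing in lemma~\ref{typeNchar}: for a non-vanishing null Maxwell field, ``$\dho_i\phip_{j\dots k}=0$ and $\bl$ Kundt'' is exactly the statement that $\nabla\BF$ is of type N. Since a type N tensor has only boost weight $-2$ components, it is trivially 1-balanced, so the implication ``($\dho_i\phip_{j\dots k}=0$ and $\bl$ Kundt) $\Rightarrow\nabla\BF$ 1-balanced'' follows at once from lemma~\ref{typeNchar}. The real work is the converse, where from 1-balancedness of $\nabla\BF$ I must recover that $\bl$ is Kundt and, more delicately, that $\dho_i\phip_{j\dots k}$ vanishes --- equivalently, that $\nabla\BF$ is genuinely type N rather than merely of boost order $\le -1$.

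First I would extract the Kundt property. Being 1-balanced, $\nabla\BF$ has boost order $\le -1$, so every frame component of non-negative boost weight must vanish. For $\BF=\bl\wedge\bff$ these non-negative boost weight components of $\nabla\BF$ are controlled by the optical matrix of $\bl$ (its expansion, shear and twist, together with the non-geodesic part) multiplied by $\phip_{i\dots j}$; as $\BF\neq0$ forces $\phip_{i\dots j}\neq0$, their vanishing makes $\bl$ geodesic and non-expanding, i.e.\ Kundt. This is the content of proposition~C.1 of \cite{OrtPra16}, already invoked in lemma~\ref{FtypeNchar}, and it is consistent with \eqref{Max3}.

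With $\bl$ Kundt I would then show that the boost weight $-1$ part of $\nabla\BF$ also vanishes. By lemma~\ref{dFcomponents} this part is assembled from the scalars $\dho_i\phip_{j\dots k}$ together with $\tau_i\phip_{ij\dots k}$ and $\tau_{[i}\phip_{j\dots k]}$; Maxwell's equations \eqref{Max1}, \eqref{Max2} identify the two $\tau$-terms with the divergence and curl of $\phip$, so the entire boost weight $-1$ part is governed by $\dho_i\phip_{j\dots k}$. Imposing the full 1-balanced requirement on these components and invoking the GHP commutators in the aligned Weyl and traceless Ricci type III background --- where all positive boost weight curvature scalars are absent and, for a Kundt $\bl$, so are the optical quantities --- I would deduce $\dho_i\phip_{j\dots k}=0$; the contracted Maxwell equation \eqref{Max1} then yields $\tau_i=0$ exactly as in lemma~\ref{typeNchar}. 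Hence $\nabla\BF$ has only boost weight $-2$ components, i.e.\ it is type N, and lemma~\ref{typeNchar} closes the equivalence.

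The step I expect to be the main obstacle is precisely this last one: disentangling the 1-balanced conditions on the boost weight $-1$ frame components from the transport identities that hold automatically in a Kundt type III spacetime, so as to conclude the genuine vanishing of $\dho_i\phip_{j\dots k}$ rather than just its constancy along $\bl$. The type III hypothesis on both the Weyl and the traceless Ricci tensors is what makes this viable, since it removes every positive boost weight curvature term that would otherwise survive in the relevant commutators and spoil the bookkeeping.
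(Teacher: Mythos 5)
There is a genuine gap, and it sits exactly where you declared the proof to be trivial. You assert that ``a type N tensor has only boost weight $-2$ components, [so] it is trivially 1-balanced,'' and on that basis dispose of the implication ($\dho_i\phip_{j\dots k}=0$ and $\bl$ Kundt) $\Rightarrow$ ($\nabla\BF$ 1-balanced) by citing lemma~\ref{typeNchar}. This is false as a matter of definition: 1-balancedness (in the sense of \cite{HerPraPra14}, as used throughout the paper) is strictly stronger than having boost order $\le -2$. A 1-balanced tensor must, in addition, have its boost weight $-2$ frame components $\eta$ satisfy the decay condition $D\eta=0$ in an affinely parametrized, parallelly propagated frame. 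So type N of $\nabla\BF$ does \emph{not} by itself give 1-balancedness, and this extra differential condition is precisely the non-trivial content of the paper's proof: by lemma~\ref{dFcomponents} the only surviving components are $(\T F)_{1i\dots j}=\tho'\phip_{i\dots j}$, and one must verify $D\,\tho'\phip_{i\dots j}=0$. The paper does this using the transport equations $D\phip_{i\dots j}=DL_{11}=DM^i_{j1}=0$ together with the commutator $[\T,D]=L_{11}D$, which holds thanks to the Kundt assumption and the aligned Weyl/traceless Ricci type III hypothesis --- this is in fact where the curvature assumptions do their work, not in the converse direction as you suggest. Your proposal contains no substitute for this step, so the ``if'' direction is unproven.

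Conversely, the direction you call ``the real work'' is essentially immediate and needs none of the commutator bookkeeping you outline: 1-balancedness implies boost order $\le -2$, i.e.\ $\nabla\BF$ is of (aligned) type N, and lemma~\ref{typeNchar} then directly yields $\dho_i\phip_{j\dots k}=0$ and $\bl$ Kundt. (Your longer route through the non-negative and boost weight $-1$ components is not wrong --- it amounts to re-deriving lemmas~\ref{FtypeNchar} and \ref{typeNchar} --- but the vanishing of the component $(\delta_i F)_{1j\dots k}=\dho_i\phip_{j\dots k}$ is part of the boost order condition itself; no GHP commutators are needed there.) In short, you have inverted which implication is hard: the easy direction is the one you labour over, and the hard direction, which requires checking the $D$-decay of the boost weight $-2$ components, is the one you skip.
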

\begin{proof}
Thanks to lemma~\ref{typeNchar}, we know that $\dho_i \phip_{j \dots k} = 0$ and $\bl$ Kundt are necessary conditions for 1-balancedness of $\nabla \BF$. To show that these conditions are also sufficient, it remains to verify (lemma \ref{dFcomponents}) that $D \tho^\prime \phip_{i \dots j} = 0$ in an affinely parametrized, parallely propagated frame. Since $D \phip_{i \dots j}=D L_{11}=D M^i_{j1}=0$ and $[\T,D]=L_{11}D$ (thanks to the Kundt and curvature assumptions, cf., e.g., appendix~A.1 of \cite{OrtPra16}), the assertion follows.  
\end{proof}

\begin{remark}
 \label{rem_DFDF_2}
 Thanks to lemma~\ref{lemma_DFDF} and remark~\ref{rem_DFDF}, the condition $\dho_i \phip_{j \dots k} = 0$ in theorem~\ref{1balcriteria} can equivalently be written in a covariant form as 
\be
 \nabla_c F_{ad \dots e} \nabla^c F\indices{_b^{d \dots e}} =0. 
\ee
Let us emphasize that for a null Maxwell field aligned with a Kundt direction, this condition implies that $\bl$ is recurrent (as observed in the proof of lemma~\ref{typeNchar}).
\end{remark}

\begin{remark}
 \label{rem_DFDF_vsi}
 In the special case of a null Maxwell field in a VSI spacetime, in the canonical coordinates and frame of \cite{Coleyetal06}, Maxwell's equations imply $\phip_{j \dots k,r} = 0$, while the condition $\dho_i \phip_{j \dots k} = 0$ of theorem~\ref{1balcriteria} simplify reads $\phip_{j \dots k,l} = 0$. Therefore, $\phip_{j \dots k}$ is only a function of $u$. All Maxwell's equations \eqref{Max1}--\eqref{Max3} are then satisfied identically (cf. also \cite{OrtPra16}).
\end{remark}

%

\end{document}